\theoremstyle{definition}
\newtheorem{lemma}{Lemma}
\newtheorem{theorem}{Theorem}
\newtheorem{proposition}{Proposition}
\newtheorem{remark}{Remark}
\begin{document}
	
	\title{UAV-Assisted Multi-Cluster Over-the-Air Computation}
	\author{Min~Fu,~\IEEEmembership{Student Member,~IEEE},~Yong~Zhou,~\IEEEmembership{Member,~IEEE},  Yuanming~Shi,~\IEEEmembership{Senior Member,~IEEE}, Chunxiao~Jiang,~\IEEEmembership{Senior Member,~IEEE}, and Wei~Zhang,~\IEEEmembership{Fellow,~IEEE}
    	\thanks{M. Fu, Y. Zhou,  and Y. Shi are with School of Information Science and Technology, ShanghaiTech University, Shanghai 201210, China (e-mail: \{fumin, zhouyong, shiym\}@shanghaitech.edu.cn).}    
    	\thanks{C. Jiang is with the Tsinghua Space Center and the Beijing National Research Center for Information Science and Technology, Tsinghua University, Beijing 100084, China (e-mail: jchx@tsinghua.edu.cn).}  	   	
    	\thanks{W. Zhang is with the School of Electrical Engineering and Telecommunications, The University of New South Wales, Sydney, NSW 2052, Australia
    	(e-mail: w.zhang@unsw.edu.au).}
    	
}	
\maketitle
\begin{abstract}
In this paper, we study unmanned aerial vehicles (UAVs) assisted wireless data aggregation (WDA) in multi-cluster networks, where multiple UAVs simultaneously perform different WDA tasks via over-the-air computation (AirComp) without terrestrial base stations.
This work focuses on maximizing the minimum amount of WDA tasks performed among all clusters by optimizing the UAV’s trajectory and transceiver design as well as cluster scheduling and association, while considering the WDA accuracy requirement.
Such a joint design is critical for interference management in multi-cluster AirComp networks, via enhancing the signal quality between each UAV and its associated cluster for signal alignment and meanwhile reducing the inter-cluster interference between each UAV and its non-associated clusters.
Although it is generally challenging to optimally solve the formulated non-convex mixed-integer nonlinear programming, an efficient iterative algorithm as a compromise approach is developed by exploiting bisection and block coordinate descent methods, yielding an optimal transceiver solution in each iteration.
The optimal binary variables and a suboptimal trajectory are obtained by using the dual method and successive convex approximation, respectively.
Simulations show the considerable performance gains of the proposed design over benchmarks and the superiority of deploying multiple UAVs in increasing the number of performed tasks while reducing access delays.
\end{abstract}
	
\begin{IEEEkeywords}
	Over-the-air computation, UAV communications, wireless data aggregation, multi-cluster cooperation, interference management.
\end{IEEEkeywords}
	
\section{Introduction}	\label{Section:introduction}

Machine-type communication (MTC) is one of the disruptive technologies promised by 5G and beyond wireless networks \cite{letaief2022edge}.
Therein, it is crucial to collect and leverage Big Data effectively for decision-making to automate various intelligent applications.
However, collecting data generated by an enormous number of devices in the future Internet of Things (IoT) networks is critically challenging due to limited spectrum resource \cite{agiwal2016next}, \cite{Shi2020Communication}.
Meanwhile, many emerging IoT applications (e.g., environmental monitoring) only aim to collect a particular function of these massive data rather than to reconstruct each individual data, which is referred to as wireless data aggregation (WDA) \cite{Zhu2020WDA}.
To meet these demands, over-the-air computation (AirComp) has recently been considered as an attractive technique to enable fast WDA among massive devices by seamlessly integrating communication and computation processes \cite{Golden2013AirComp}. 
The principle of AirComp is to exploit the waveform/signal superposition property of multiple-access channels (MAC) such that an edge server directly receives a function of concurrently transmitted data.
This results in low transmission delays regardless of the amount of devices, and makes AirComp particularly appealing to data-intensive and/or latency-critical applications such as consensus control\cite{Molinari2021Consensus}, distributed sensing \cite{Liu2019AirCompTWC}, and distributed machine learning\cite{Yang2020FL},\cite{Wang2021FLRISTWC}.

So far, AirComp has been studied from various aspects in single-cell networks, such as single-input-single-output (SISO) AirComp \cite{Liu2019AirCompTWC, Cao2020Optimized}, multiple-input-single-output (MISO) AirComp \cite{Chen2018UniformForcing, Fang2021AirComp}, and multiple-input-multiple-output (MIMO) AirComp \cite{Zhu2019Aircompmobility}.
In single-cell networks, to achieve accurate computing, AirComp requires the phase and magnitude of all signals to be aligned at the receiver side. 
However, channel heterogeneity across devices makes signal alignment challenging.
To cope with this issue, different transceiver designs have been proposed to compensate for the non-uniform channel fading and suppress the noise.
Specifically, for SISO AirComp, the authors in \cite{Liu2019AirCompTWC, Cao2020Optimized} proposed the optimal transmit power control and receive normalizing factor design.
For multi-antenna AirComp systems, beamforming vectors at receiver and/or transmitter were designed to minimize the computation error in \cite{Chen2018UniformForcing,Zhu2019Aircompmobility}.
However, the AirComp performance may deteriorate when one or more power-constrained devices are in deep fading. 
To mitigate the communication bottleneck, the authors in \cite{Fang2021AirComp} employed a promising reconfigurable intelligent surface (RIS) \cite{Yuan2021RIS,Fu2021Intelligent} to jointly design the passive beamforming at the RIS and transceiver.
In addition, we in \cite{Fu2021UAV}, \cite{Fu2021UAVAirComp} proposed to deploy an unmanned aerial vehicle (UAV) as a mobile server and to exploit its mobility to avoid any device being in deep fade, thereby enhancing the performance of AirComp.

Meanwhile, the IoT networks generally involve different WDA tasks, each of which is characterized by their applications (e.g., classification FL task \cite{Wen2022joint}, regression FL task, and sensing tasks \cite{Lan2020simultaneous}), data types (e.g., model parameters in machine learning and velocity in connected car platooning applications), and computing functions (e.g., sum and mean).
Hence, researchers recently advocated the study of AirComp in a multi-cluster network to simultaneously complete multiple WDA tasks, which is referred to as multi-cluster AirComp \cite{Zhu2020WDA}.
In multi-cluster AirComp, besides signal alignment and noise suppression, inter-cluster interference management arises as a new challenge.
Different from the orthogonal multiple access (OMA) based multi-cluster networks \cite{Shi2014GroupSparse}, each server in the multi-cluster AirComp network not only harnesses intra-cluster interference for function computation, but also suppresses inter-cluster interference for computation error reduction.
As an initial study, the authors in \cite{Lan2020simultaneous} proposed a signal-and-interference alignment scheme to simultaneously eliminate inter-cell interference and align intra-cell signals in a two-cell MIMO AirComp network.
Subsequently, the authors in \cite{Cao2021multiCellAirCompTWC} studied the weighted sum mean-square error (MSE) minimization problem in multi-cell SISO AirComp networks by optimizing the transmit power of devices.
Results in \cite{Cao2021multiCellAirCompTWC} showed that, the performance of AirComp is limited by the inter-cell interference since the transmit power of the device needs to balance the tradeoff between combating inter-cell interference and enhancing signal alignment within the cell.
These results are established when the static terrestrial BS is available.
However,  in remote and under-developed areas, the terrestrial BSs are usually sparsely deployed or not available. 
In these harsh circumstances, it is critical to deploy more flexible BSs to unleash the potential of multi-cluster AirComp.

As a parallel but complementary study, in this paper, we investigate a novel multi-cluster AirComp framework with multiple UAVs dispatched as flying BSs to cooperatively perform diverse AirComp tasks, where no terrestrial BS is available.
In fact, multiple UAVs have been deployed as BSs to assist the terrestrial networks for rate-oriented communications \cite{Wu2018MultiUAV}, \cite{Zhan2019timeTWC}, \cite{Shen2020mutiUAV}.
For example, the authors in \cite{Wu2018MultiUAV} studied multi-UAV cooperative communications to achieve higher data rate and lower access delay.
The authors in \cite{Zhan2019timeTWC} considered a multi-UAV enabled wireless network for data collection. 
This motivates us to study the use of multiple UAVs in multi-cluster AirComp by taking into account UAV trajectory planning as well as cluster scheduling and association for interference management, which has the following advantages.
First, multi-UAV cooperation exploits spectrum sharing to allow multiple clusters to be simultaneously served, thereby increasing the amount of performed task within the given time duration.
Second, scheduling the clusters that are far from each other can avoid strong co-channel interference and thus improve computation accuracy.
Furthermore, the joint design not only shortens the communication distances between the UAV and its associated cluster to enhance intra-cluster signal alignment but also enlarges the communication distances between the UAV and its non-associated clusters to rein in the inter-cluster interference.
This joint design is crucial for interference management in multi-cluster AirComp networks, but has not yet been studied in the literature to the best of the authors’ knowledge.

A widely adopted performance metric for quantifying the AirComp computation error is MSE between the estimated function value and the ground truth\cite{Yang2020FL, Zhu2019Aircompmobility, Cao2020Optimized}.
To promote fairness among clusters, under the given target MSE requirements, we aim to maximize the minimum amount of performed WDA tasks among all clusters by jointly designing cluster scheduling and association, UAVs' trajectories, and transceiver design in a given time duration.
However, such a non-convex max-min fairness problem presents unique challenges due to their discontinuous objective functions (as a result of the binary cluster scheduling and association variables), non-convex MSE constraints (because of the coupling between all optimization variables), and non-convex trajectory design.
It is generally challenging to optimally solve such a mixed-integer non-convex optimization problem.

\subsection{Contributions}
The main contributions of this paper are summarized as follows.
\begin{itemize}
	\item 
	This paper is one of the early attempts to study the multi-UAV enabled AirComp in multi-cluster wireless networks, where multiple UAVs are deployed to cooperatively perform different AirComp tasks.
	To ensure fairness among clusters, we aim to maximize the minimum amount of performed WDA tasks among all clusters by jointly optimizing scheduling and association for clusters, UAVs' trajectory planning, and transceiver design, taking into account the target accuracy of AirComp, practical constraints on UAVs, inter-cluster interference, as well as the total power budgets at devices.
	
	\item  
	To render the resulting non-convex mixed-integer nonlinear programming (MINLP) tractable, by leveraging the bisection method, the original problem reduces to a sequence of the minimum ratio maximization problems, which enables the development of an iterative algorithm.
	After adopting block coordinate descent (BCD) \cite{xu2013block}, besides both convex normalizing factors and power optimization subproblems being optimally solved, the optimal cluster scheduling and association are also obtained by applying the low-complexity Lagrange duality method.
	For the non-convex UAV trajectory optimization problem, a suboptimal solution is obtained by using the successive convex approximation (SCA) method.	
	
	\item 
	Simulation results are presented to show the effectiveness and superiority of the proposed design and developed algorithm.
	In multi-cluster AirComp with a single UAV, the performance achieved by the joint design outperforms other benchmarks and can reach the upper bound.
	It is also shown that the max-min task amount of the considered UAV network increases with the mission duration, revealing a performance-access delay tradeoff in multi-cluster AirComp.
	Compared to the single-UAV case, the use of multiple UAVs with effective cooperative interference management can considerably increase the amount of tasks performed by each cluster while reducing access delays.
	
\end{itemize}

\subsection{Organization}
The rest of this paper is organized as follows. 
We presents the system model and the problem formulation for multi-cluster AirComp assisted by UAVs in Section \ref{Section:model}. 
 We develop an iterative algorithm yielding high-quality solutions to solve the formulated problem based on the Bisection method, BCD, and SCA in Section \ref{Section:algorithm}.
In addition,  numerical results are presented in Section \ref{Section:simulation} to evaluate the performance of the proposed design. 
In Section \ref{Section:conclusion}, we draw the conclusions.

\textit{Notations}: Scalars, column vectors, and matrices are written in italic letters, boldfaced lower-case letters, and boldfaced upper-case letters respectively, e.g., $a$, $\mathbf{a}$, $\mathbf{A}$. $\mathbb{R}^{M \times N}$ denotes the space of a real-valued matrix with with $M$ rows and $N$ columns. 
	 $\left\| \mathbf{a} \right\|_2$ denotes the Euclidean norm of vector $\mathbf{a}$ and $\mathbf{a}^T$ represents its transpose. $|\mathcal{S}|$ denotes the cardinality of the set $\mathcal{S}$.
	 
\section{System Model And Problem Formulation}\label{Section:model}	
\begin{figure}[t]
	\centering	\includegraphics[scale = 0.3]{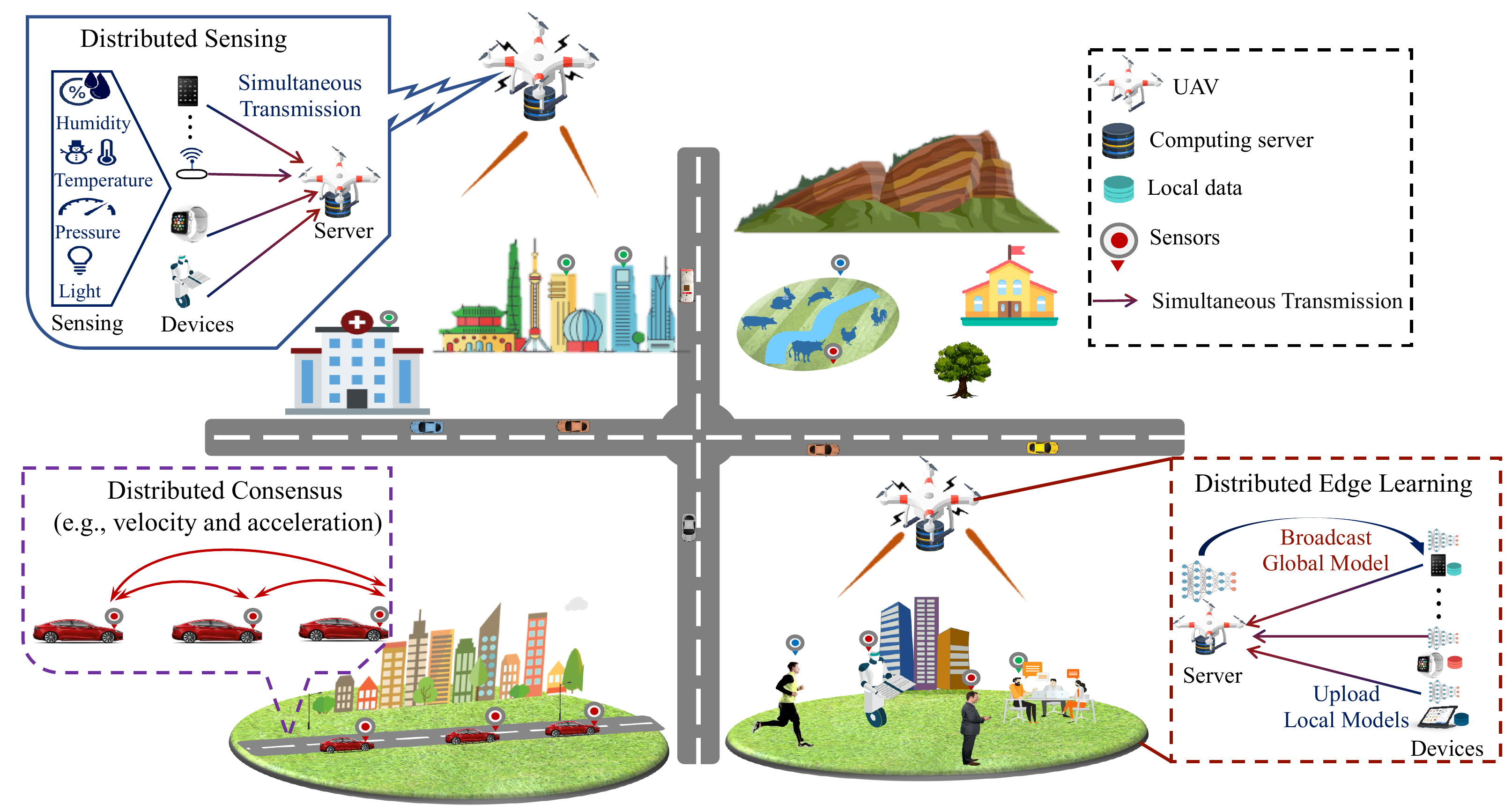}
	\vspace{-3mm}
	\caption{Illustration of multi-UAV assisted AirComp in multi-cluster wireless networks.}
	\label{model}
	\vspace{-8mm}
\end{figure}	
\subsection{System Model}
As shown in Fig. \ref{model}, we consider a multi-cluster wireless network, where each cluster consists of multiple ground devices and  multiple UAVs\footnote{We consider the scenario that the terrestrial BS is not available, e.g., in wild areas. Under this circumstance, the UAVs are adopted as an alternative to provide wireless services for ground devices \cite{Wu2018MultiUAV}. }
are deployed to support multiple WDA tasks using AirComp.
Due to the limited physical size, we assume that UAVs and ground devices are all equipped with one antenna.
Let $\mathcal{M} \triangleq \{1,\ldots, M\}$ and  $\mathcal{L} \triangleq \{1,2,\ldots,L\}$ denote the sets of $M$ UAVs and $L$ clusters, respectively.
We denote $\mathcal{K}_l \triangleq \{1,\ldots, K_l\}$ 
as the set of $K_l$ ground devices in cluster $l$ with $\mathcal{K}_i \cap \mathcal{K}_j = \emptyset, \forall i \neq j, i, j \in \mathcal{L}$, and $\mathcal{K} \triangleq \cup_{l\in \mathcal{L}} \mathcal{K}_l$ as the set of all ground devices.
Therein, the ground devices in set $\mathcal{K}_l$ need to collaboratively perform the type-$l$ WDA task, i.e., computing a specific nomographic function (e.g., arithmetic mean) with respect to their measured data, and transmit the aggregated data to one of the UAVs. 
To avoid excessive communication latency,  we adopt the AirComp technique \cite{Yang2020FL} to achieve fast  WDA by enabling concurrent transmission among multiple ground devices.
Compared to the OMA scheme, the unique feature of AirComp is the utilization of  the waveform superposition property of a MAC to harness the  intra-cluster interference for function computation, thereby reducing the communication latency.
Hereinafter, we refer to this type of  WDA task by using the AirComp technique as an AirComp task.

We assume that the AirComp tasks in different clusters are distinct, i.e., different classes of IoT applications supported by different clusters (e.g.,  distributed machine learning  \cite{Yang2020FL}, \cite{Wang2021FLRISTWC} and control consensus \cite{Molinari2021Consensus}).
Among them, their data types (e.g., model parameters in machine learning and velocity in connected car platooning applications) and the intended computing functions (e.g., sum and mean) can be different.
For UAV deployment, it is not practical to dispatch one or more UAVs for each cluster, especially when the number of clusters served is relatively large.
Thus, in this paper, we assume that  $M\leq L$ and
each UAV can select an appropriate cluster for association. 
Different UAVs can simultaneously deal with different AirComp tasks over the same frequency band, thereby reducing the access delay of ground devices in all clusters.
As a result, this leads to inter-cluster interference among $M$ clusters and calls for interference management to balance the computing errors among different clusters.

We aim to optimize the trajectories of UAVs to improve the WDA accuracy of all clusters.
This can be achieved by jointly designing the UAVs' trajectories and UAV-cluster association to shorten the communication distances between the UAV and its associated cluster to enhance intra-cluster signal alignment, and enlarge the communication distances between the UAV and its non-associated clusters to rein in the inter-cluster interference.
To facilitate trajectory design, in this paper,
time discretization technique is adopted to divide the mission duration  $T$ (s) into $N$ equal time-slots with time step-size $\delta = \frac{T}{N}$, as in most of the previous studies on UAV-enabled wireless networks \cite{Wu2018MultiUAV, Zhan2019timeTWC, Ma2021JSAC}.
The set of $N$ time slots  is denoted as $\mathcal{N} \triangleq \{1,\ldots,N\} $. 
Note that the time interval $\delta$ needs to be appropriately set, so that in each time slot, each UAV can complete one AirComp task while the changes in the communication distances between the UAVs and the ground devices are negligible.

In a three dimensional (3D) Cartesian coordinate system,
we assume that all UAVs has a fixed altitude of $H$ (m) over the horizontal plane, where $H$ can be set to the minimum altitude to ensure that  the UAVs can avoid obstacles (e.g., building or terrain) without the need for frequent ascent and descent.
We denote, at time slot $n$, the horizontal coordinate of UAV $m$ as $\bm q_m[n] \triangleq [x_m[n],y_m[n]]^{\sf T} \in \mathbb{R}^{2 \times 1}$.
Moreover, the fixed horizontal coordinate of each ground device $k \in \mathcal{K}$ is denoted as $\mathbf{w}_k = [x_k,y_k]^T \in \mathbb{R}^{2 \times 1}$, which is assumed to be known at the UAVs.
\subsubsection{Channel Model}  
A large body of studies on air-to-ground channel modeling \cite{Matolak2017channelModel}, \cite{Channel2017UAV}.
 And recent field experiments by Qualcomm \cite{Qual2017LET} has verified that when UAVs fly above a certain altitude, there is a high probability for the UAV-to-ground channel to be dominated by a line-of-sight (LoS) link. 
Therefore, the uplink channels from devices to UAVs are assumed to be dominated by LoS links in this paper.
The channel between UAV $m\in \mathcal{M}$ and device $k\in \mathcal{K}$ at time slot $n\in \mathcal{N}$ is modeled as 
 \setlength\arraycolsep{2pt}
\begin{eqnarray}\label{Eq:channel}
	h_{k,m}[n] = \sqrt{\beta_0 d^{-\gamma}_{k,m}[n] }e^{-j\theta_{k,m}[n]},
\end{eqnarray} 
where $\beta_0$ denotes the channel power gain at the reference distance of $d_0 = 1$ m, $\gamma \ge 2$ is the path loss exponent, and $\theta_{k,m}[n]$ is phase shift component. 
We assume that all devices can perfectly estimate its own phase component with its associated UAV.
And $d_{k,m}[n]$ denotes the distance from UAV $m$ to ground device $k$, which is represented as 
\begin{eqnarray}\label{Eq:distance}
	d_{k,m}[n] = \sqrt{H^2+\left\|\bm q_m[n]-\mathbf{w}_k\right\|_2^2}.
\end{eqnarray} 
Note that the Doppler effect is assumed to be perfectly compensated \cite{Mengali1997Sync}.

\subsubsection{Multi-cluster AirComp}
We define a set of binary variables $\{a_{l,m}[n] |  l \in \mathcal{L}, m\in\mathcal{M}, n\in\mathcal{N}\}$ to represent UAV-cluster scheduling and association over different time slots. 
We set $a_{l,m}[n] = 1$ if there exist uplink transmissions from ground devices in cluster $l$ to UAV $m$ at time slot $n$, and $a_{l,m}[n] = 0$ otherwise.
Therefore, $a_{l,m}[n]$ not only indicates the association status between  UAV $m$ and cluster  $l$ at slot $n$, but also determines the communication scheduling for cluster $l$ at slot $m$.
Suppose that, in each time slot,  the ground devices in each cluster can communicate with at most one UAV, and each UAV can serve at most one cluster.
In summary, the binary variables impose the following constraints.
\begin{eqnarray}
	&&\sum_{m=1}^M a_{l,m}[n] \leq 1, \forall l \in \mathcal{L}, n\in\mathcal{N}, \label{Cons:A UAV}\\
	&&\sum_{l=1}^La_{l,m}[n] \leq 1, \forall m \in \mathcal{M}, n\in\mathcal{N}, \label{Cons:A cluster}\\
	&&a_{l,m}[n]\in\{0,1\}, \forall l\in\mathcal{L},m\in\mathcal{M},n\in\mathcal{N}. \label{Cons:A binary}
\end{eqnarray}

We consider the case that each cluster aims to compute the average of distributed data measured by the ground devices \cite{Yang2020FL}, \cite{Cao2020Optimized}.
In the following,  we take cluster $l$ associated with UAV $m$ as an example.
Let $z_{k}[n] \in \mathbb{C}$ denote the measured data from device $k\in \mathcal{K}_l$ in cluster $l$ at  time slot $n$.
We denote the target function  (e.g., arithmetic mean) of $K_l$ variables as $f_l[n]:\mathbb{C}^{K_l}\rightarrow\mathbb{C}$.
Therefore, the target average function of type-$l$ data at time slot $n$ is expressed as 
\begin{eqnarray}\label{Eq: desired function}
	{f}_l[n] = \frac{1}{K_l} \sum_{k\in\mathcal{K}_l} s_k[n],
\end{eqnarray}
where   $s_k[n] \triangleq \psi_{k} (z_k[n]), \forall k\in \mathcal{K}_l$ with $\psi_{k}$ denoting the pre-processing function  at device $k\in \mathcal{K}_l$.
We assume that  $\{s_k[n]\}_{k\in\mathcal{K}_l}$ are independent and have zero mean and unit variance as in \cite{Cao2020Optimized},\cite{Liu2019AirCompTWC}, i.e., $\mathbb{E}(s_k[n]) = 0$, $\mathbb{E}(s_k[n]s_k^{\sf H}[n]) = 1$, and $\mathbb{E}[s_{i}[n]s_{j}[n]^{\sf H}] = 0 , \forall i \neq j$.

In the sequel, we describe how to estimate the desired function $f_l[n]$ in \eqref{Eq: desired function} at UAV $m$ via AirComp.
Let $\mathcal{J}(n)=\{l|\sum_{m\in\mathcal{M}}a_{l,m}[n]=1, \forall l \in \mathcal{L}\}$ denote the index set of active clusters at time slot $n$.
After all devices in set $\mathcal{K}_l$ simultaneously send their pre-processed signals $\{s_{k}[n]\}$ to UAV $m$ over the same radio channel, the received signal at UAV $m$ is written as
\begin{eqnarray}\label{Eq:received signal}
	&&\hspace{-1em} y_{m}[n] = \sum_{k\in\mathcal{K}_{l}} b_k[n]h_{k,m}[n]s_{k}[n] + \sum_{j\in\mathcal{J}(n)\backslash\{l\}}\sum_{i\in\mathcal{K}_j}h_{i,m}[n]b_i[n]s_i[n] + e_{m}[n],
\end{eqnarray}
where  $b_k[n]\in \mathbb{C}$ denotes the transmit precoding coefficient at device $k$ for channel-fading compensation at time slot $n$, and $e_m[n]$ denotes the  additive white Gaussian noise (AWGN) at UAV $m$, i.e., $e[n]\thicksim\mathcal{C}\mathcal{N}$$(0, \sigma^2)$.
The ground devices' transmissions are assumed to be synchronized \cite{Liu2019AirCompTWC, Cao2020Optimized}.
It is worth mentioning that the synchronization techniques reported in the literature can be applied in our work, such as the AirShare technique \cite{abari2015airshare} and the timing advance mechanism  used in the long term evolution systems\cite{Zhu2020WDA}.

Upon receiving signal $y_{m}[n]$, after post-processing and scaling at UAV, the estimated average function $m$ is given by
\begin{eqnarray}\label{Eq:estimated function}
	\hat{f}_{l, m}[n] =  \frac{\eta_{l, m}[n]y_{m}[n]}{K_l},
\end{eqnarray}
where factor $1/K_l$ is employed for averaging purpose and
$\eta_{l, m}[n]\in \mathbb{C}$ is a normalizing factor at UAV $m$ applied to scale received signal $y_{m}[n]$ for compensating channel fading and suppressing noise, thereby
accurately estimating the target function $f_{l}[n]$.	
It can be observed from \eqref{Eq:received signal} and \eqref{Eq:estimated function} that UAV $m$ harnesses the intra-cluster interference for computing the target function but suffers from inter-cluster interference due to the communication resource reuse among UAVs.
Each device $k$ has a total transmit power budget $P_k$, which is given by
\begin{eqnarray}\label{sum p}
	\sum_{n=1}^{N} |b_k[n]|^2 \leq  P_k, \forall k,
\end{eqnarray}
where the total power constraint at each ground device is considered for adaptive power allocation over different time slots.

\subsection{Performance Metrics}
To quantify the performance of AirComp,  the computation distortion of the ground true average function $f_l[n]$ is measured by its MSE. 
Specifically, when $a_{l,m}[m] = 1$, the corresponding instantaneous MSE of type-$l$ AirComp task at UAV $m$ at time slot $n$ is given by
\begin{eqnarray}\label{Eq:instant MSE-m}
{\sf{mse}}_{l,m,n}
&=&	\mathbb{E}[|\hat{f}_{l,m}[n] - f_l[n]|^2] =\frac{1}{K_l^2}\mathbb{E}\Big[\Big|{\eta_{l,m}[n]}{y_{m}[n]}-\sum_{k\in\mathcal{K}_l}s_k[n]\Big|^2\Big]\nonumber\\
	&=& \dfrac{1}{K_l^2} \Big[\sum_{k\in\mathcal{K}_l}\Big(\eta_{l,m}[n]b_k[n]h_{k,m}[n]\!-1\Big)^2 +\eta_{l,m}^2[n]\Big(\hat{I}_{l,m}[n] + \sigma^2\Big)\Big], 
\end{eqnarray}
where $ \hat{I}_{l,m}[n] \triangleq \sum_{j \in\mathcal{J}(n)\backslash\{l\}}\sum_{i\in\mathcal{K}_{j}}|b_i[n]|^2|h_{i,m}[n]|^2$
represents the inter-cluster interference received at UAV $m$ when it associates with cluster $l$ at time slot $n$.
The expectation in \eqref{Eq:instant MSE-m} is taken over the distributions of signals $\{s_k[n]\}$ and  noise $e_m[n]$. 
The instantaneous MSE of type-$l$ AirComp task at time slot $n$ is represented as
\begin{eqnarray} \label{Eq:instant MSE1}
	{\sf {MSE}}_{l,n} = \sum_{m=1}^Ma_{l,m}[n]{\sf mse}_{l,m,n}.
\end{eqnarray}

Before formulating the optimization problem, we first simplify the transmit precoding coefficients and binary variable design.
Note that  the binary variables in \eqref{Eq:instant MSE1} are coupled across different clusters and UAVs due to the interference term. 
The interference terms in ${\sf mse}_{l,m,n}$ are simplified as
\begin{eqnarray}
 I_{l,m}[n] \triangleq \sum_{j \in\mathcal{L}\backslash\{l\}}\sum_{i\in\mathcal{K}_{j}}|b_i[n]|^2|h_{i,m}[n]|^2.
\end{eqnarray}
This is because if there is no UAV associated with cluster $j$, i.e, $\sum_{m\in\mathcal{M}}a_{j,m}[n]=0$, then the transmission power of device $i\in\mathcal{K}_{j}$ must be zero (i.e., $b_i[n]=0$).
Furthermore, it can also be  observed from \eqref{Eq:instant MSE-m}  that the phases of $b_i[n]$ and $h_{i,m}[n]$ for  $\forall j \in\mathcal{L}\backslash\{l\}, \forall i\in\mathcal{K}_{j}$ do not affect the inter-cluster interference induced error in ${\sf{mse}}_{l,m,n}$ while  the essential condition for ${\sf{mse}}_{l,m,n}$ to reach the minimum value is the terms $\eta_{l,m}[n]b_k[n]h_{k,m}[n], \forall k\in\mathcal{K}_l$ in \eqref{Eq:instant MSE-m} must be real and non-negative.
Therefore, with $\eqref{Cons:A UAV}$, when $a_{l,m}[n] = 1$, we have
\begin{eqnarray} \label{Neq:instantaneous MSE}
{\sf {MSE}}_{l,n} = {\sf mse}_{l,m,n}\geq \dfrac{1}{K_l^2}\Big[\sum_{k\in\mathcal{K}_l}\Big(\eta_{l,m}[n]\sqrt{p_k[n]}|h_{k,m}[n]|-1\Big)^2 +\eta_{l,m}^2[n]\Big(I_{l,m}[n] + \sigma^2\Big)\Big], 
\end{eqnarray}
where the equality holds when $\eta_{l, m}[n]\in \mathbb{R}^{+}$ and
$b_k[n]\triangleq \frac{ \sqrt{p_{k}[n]} h_{k,m}^{\dagger}[n]}{ |h_{k,m}[n]|}$ with $p_k[n] \in \mathbb{R^+}$ denoting the transmit power of device $k$ at time slot $n$. please refer to \cite{Fu2021UAV} for its proof.
Meanwhile, if $\sum_{m=1}^{M}a_{l,m}[n] = 0$, we have ${\sf {MSE}}_{l,n}=0$ and $b_k[n] = 0, \forall k\in \mathcal{K}_l$.
Hence, at time slot $n$, the achievable instantaneous MSE of type-$l$ data  can be represented as  	
\begin{eqnarray}\label{Eq:instant MSE}
	{\sf {MSE}}_{l,n} &= &\sum_{m=1}^Ma_{l,m}[n]{\sf mse}_{l,m,n} \nonumber \\
	&=& \sum_{m=1}^M\frac{a_{l,m}[n]}{K_l^2} \Big[\sum_{k\in\mathcal{K}_l}\Big(\eta_{l,m}[n]|h_{k,m}[n]|\sqrt{p_k[n]}-1\Big)^2 + \eta_{l,m}^2[n]\Big(I_{l,m}[n] + \sigma^2\Big)\Big], 
\end{eqnarray}  	
which is a linear function of $a_{l,m}[n], \forall m$  but not coupled over different clusters, and is related to
real transmit power $p_k[n]$ instead of complex transmit precoding coefficient $b_k[n]$.
\begin{remark}
One can observe that ${\sf {MSE}}_{l,n}$ is composed of three components, including the signal misalignment error (i.e., $ \sum_{m=1}^M\frac{a_{l,m}[n]}{K_l^2}\sum_{k\in\mathcal{K}_l}\big(\eta_{l,m}[n]|h_{k,m}[n]|\sqrt{p_k[n]}-1\big)^2 $), the inter-cluster interference-induced error (i.e., $\sum_{m=1}^M\frac{a_{l,m}[n]}{K_l^2} \eta_{l,m}^2[n] \sum_{j \in\mathcal{L}\backslash\{l\}}\sum_{i\in\mathcal{K}_{j}}p_i[n]|h_{i,m}[n]|^2$ ), and the noise-induced error (i.e., $ \sum_{m=1}^M\frac{a_{l,m}[n]}{K_l^2}\eta_{l,m}^2[n]\sigma^2$).
The device can increase its power $p_k[n]$ to compensate for channel fading for signal alignment whereas it leads to an increased interference  for other co-channel clusters. Meanwhile, enlarging $\eta_{l,m}[n]$ can reduce the signal-misalignment error but magnify the negative impact of noise and inter-cluster interference.
Fortunately, the trajectory design with cluster scheduling and association can be exploited to balance the tradeoff between signal misalignment error reduction and co-channel interference reduction.
Specifically, a UAV moves closer to the associated cluster $l$ to construct strong desired links $|h_{k,m}[n]|, \forall k\in\mathcal{K}_l$ for signal alignment and keeps away from other scheduled clusters to construct weak co-channel interference links $|h_{i,m}[n]|^2, i\in\mathcal{K}_{j}, j\neq l$.
\end{remark}

\subsection{Problem Formulation}
In the sequel, let $\bm{A} = \{a_{l,m}[n],\forall l\in\mathcal{L},\forall m\in\mathcal{M}, \forall n \in\mathcal{N}\}$, $\bm{P} = \{p_k[n], \forall k\in\mathcal{K},\forall n\in \mathcal{N}\}$, $\bm{\eta} = \{\eta_{l,m}[n], \forall l\in\mathcal{L},\forall m\in\mathcal{M}, \forall n \in\mathcal{N}\}$, and $\bm{Q} = \{\bm q_m[n],\forall m\in\mathcal{M},\forall n\}$.
Given mission duration $T$ and certain instantaneous MSE requirements for different AirComp tasks, we study the total amount of AirComp tasks  maximization problem while ensuring fairness among clusters.
Therefore, we maximize the minimum amount of AirComp tasks among all clusters, i.e., $\min_{l}\sum_{n=1}^N\sum_{m=1}^{M} a_{l,m}[n]$, by jointly optimizing the cluster scheduling and association indicator variables $\bm{A}$, transmit power $\bm{P}$ at devices, signal normalizing factors $\bm{\eta}$ at the UAVs, and  UAVs' trajectories $\bm Q$. 
The optimization problem is formulated as
\begin{subequations}\label{Problem:original}
	\begin{eqnarray}
\mathscr{P}:\mathop{\text{maximize}}\limits_{\bm{A},\bm{P}, \bm{\eta},\bm{Q}} \mathop{\text{min}}\limits_{l} && \sum_{n=1}^N\sum_{m=1}^{M} a_{l,m}[n] \label{P} \nonumber\\
		\text{subject to } 
		&& {\sf{MSE}}_{l,n} \le \varepsilon_l, \forall n, \forall l, \label{Cons:target MSE}\\
		&& 0 \leq \sum_{n=1}^Np_k[n]\leq P_k, \forall k,\label{Cons:power}\\
		&& \bm q_m[0] = \bm q_m[N], \forall m \in\mathcal{M}, \label{Cons:Q position}\\ 
		&&\left\|\bm q_m[n] - \bm q_m[n-1]\right\|_2 \le V_{\text{max}}\delta,    \forall n\in\mathcal{N}, \forall m\in\mathcal{M}, \label{Cons:Q speed}\\
		&&\left\|\bm q_m[n] - \bm q_i[n] \right\|_2 \geq d_{\text{min}}, \forall m \neq i, i, m \in \mathcal{M}, \forall n,  \label{Cons:Q collision}\\
		&& 0\leq \eta_m[n],\forall m, \forall n, \label{Cons:normalizing factor}\\
		&&\text{Constraints\ }  \eqref{Cons:A UAV}, \eqref{Cons:A cluster}, \eqref{Cons:A binary}. \nonumber
	\end{eqnarray}
\end{subequations}
Constraints \eqref{Cons:target MSE} describe the minimum instantaneous MSE requirement $\varepsilon_l$ for data aggregation from any cluster $l$ in each time slot.
To avoid trivial problems, it is assumed that $\varepsilon_l<1, \forall l$. 
Otherwise, a trivial solution of problem \eqref{Problem:original}, i.e, $p_k[n] =0, \forall k \in \mathcal{K}, \forall n \in \mathcal{N}$ and $\eta_{l,m}[n] =0, \forall l \in \mathcal{L}, \forall m \in \mathcal{M},\forall n \in \mathcal{N}$ achieves the certain  instantaneous MSE requirements for any given $\bm Q$ and $\bm A$.
Constraints \eqref{Cons:power} are the maximum total transmit power constraints of all devices. 
Constraints \eqref{Cons:Q position} represent the UAVs' initial and final locations constraints.
Constraints \eqref{Cons:Q speed} represent that the UAVs are constrained by the maximum speed  $V_{\max}$  in meter/second (m/s).
Constraints \eqref{Cons:Q collision} limit the minimum safe distance between UAVs for collision avoidance with $d_{\min}$ denoting the minimum inter-UAV distance in meters. 
Constraints \eqref{Cons:normalizing factor} are derived from \eqref{Neq:instantaneous MSE}.
It can be easily verified that for problem $\mathscr{P}$, if the obtained binary variables meet condition  $\sum_{m=1}^{M}a_{l,m}[n] = 0$, the power solution for all devices in cluster $l$ must satisfy $p_k[n]=0, \forall k \in \mathcal{K}_l$.

The main challenges in solving problem \eqref{Problem:original} arise from the following aspects.
First, due to the binary cluster scheduling and  association variables $\bm A$,  max-min problem \eqref{Problem:original} involves a discontinuous objective function and  integer constraints \eqref{Cons:target MSE}, \eqref{Cons:A cluster}, and \eqref{Cons:A UAV}.
Second, all continuous variables $\{\bm P, \bm \eta, \bm Q\}$ are coupled with binary variables $\bm A$ in constraints \eqref{Cons:target MSE}, which are non-convex.
In addition, collision avoidance constraints \eqref{Cons:Q collision} are also non-convex.
As such,  problem \eqref{Problem:original} is a nonconvex max-min problem with MINLP, which is generally challenging to be optimally solved.
As a compromise approach, in the following subsection, we transform problem \eqref{Problem:original} into an equivalent MINLP problem to facilitate the development of a low-complexity algorithm.

\subsection{Problem Reformulation}
To address the challenges of the max-min problem with the discontinuous objective function,  an auxiliary variable $ \Gamma = \mathop{\text{max}}\limits_{(l,n)}  \frac{{\sf{MSE}}_{l,n} }{\varepsilon_l}$ is defined as the maximum achievable AirComp MSE ratio among all clusters.
Furthermore, we denote the minimum amount of AirComp tasks among all clusters as $D\in \mathbb{Z}$.
Given any $D$, we first introduce the following optimization problem:
\begin{subequations}
	\begin{eqnarray}
		\mathscr{P}_1:\mathop{\text{minimize}}\limits_{\Gamma, \bm{A},\bm{P}, \bm{\eta},\bm{Q}}&& \Gamma(D)\nonumber\\
		\text{subject to } 
		&& \frac{{\sf{MSE}}_{l,n} }{\varepsilon_l}\le \Gamma(D), \forall n, \forall l, \label{Cons:target MSE1}\\
		&& \sum_{n=1}^N\sum_{m=1}^{M} a_{l,m}[n] \geq D, \forall l,\label{Cons:target amount}\\
			&& \text{Constraints\ } \eqref{Cons:A UAV}, \eqref{Cons:A cluster}, \eqref{Cons:A binary}, \eqref{Cons:power}, \eqref{Cons:normalizing factor},  \eqref{Cons:Q position}, \eqref{Cons:Q speed}, \eqref{Cons:Q collision}. \nonumber
	\end{eqnarray}
\end{subequations} 
Problem $\mathscr{P}_1$ aims to minimize the maximum ratio $\Gamma$,  which is a function of $D$.
By denoting the optimal objective value of $\mathscr{P}_1$ as $\Gamma^{\star}(D)$, we have the following proposition.
\begin{proposition} \label{Proposition:P1 feature}
	The optimal objective value $\Gamma^{\star}(D)$ of problem $\mathscr{P}_1$ is a non-decreasing of $D$.
\end{proposition}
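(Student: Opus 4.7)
The plan is to prove Proposition \ref{Proposition:P1 feature} by a standard feasible-set inclusion (monotonicity) argument. The key observation is that in $\mathscr{P}_1$, the parameter $D$ enters only through the task-amount constraint \eqref{Cons:target amount}, while all other constraints \eqref{Cons:A UAV}, \eqref{Cons:A cluster}, \eqref{Cons:A binary}, \eqref{Cons:power}, \eqref{Cons:normalizing factor}, \eqref{Cons:Q position}, \eqref{Cons:Q speed}, \eqref{Cons:Q collision}, and \eqref{Cons:target MSE1} are independent of $D$. Hence increasing $D$ can only tighten the feasible region.

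Concretely, I would fix two integers $D_1, D_2$ with $D_1 \le D_2$, let $\mathcal{F}(D)$ denote the feasible set of $(\Gamma, \bm A, \bm P, \bm\eta, \bm Q)$ for $\mathscr{P}_1$ given $D$, and show $\mathcal{F}(D_2) \subseteq \mathcal{F}(D_1)$. The argument is: take any feasible tuple $(\Gamma, \bm A, \bm P, \bm\eta, \bm Q) \in \mathcal{F}(D_2)$, so $\sum_{n=1}^N \sum_{m=1}^M a_{l,m}[n] \ge D_2 \ge D_1$ for every $l$; since all remaining constraints do not involve $D$, the tuple lies in $\mathcal{F}(D_1)$ as well.

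With the inclusion established, the conclusion follows from the standard fact that minimizing a common objective over a larger set yields a no-larger optimum: $\Gamma^\star(D_1) = \min_{\mathcal{F}(D_1)} \Gamma \le \min_{\mathcal{F}(D_2)} \Gamma = \Gamma^\star(D_2)$. Since $D_1 \le D_2$ was arbitrary, this is exactly the non-decreasing property. I would also briefly remark on the degenerate case: for sufficiently large $D$ (e.g., $D > N$, since each cluster can be served in at most $N$ slots), $\mathcal{F}(D) = \emptyset$ and one may set $\Gamma^\star(D) = +\infty$ by the usual convention, which is consistent with the monotonicity claim.

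The argument has essentially no technical obstacle; the only minor subtlety worth flagging is the feasibility corner case just mentioned, i.e., making sure the statement is interpreted over the range of $D$ for which $\mathscr{P}_1$ is feasible (or adopting the $+\infty$ convention otherwise). No convexity, duality, or trajectory/SCA machinery is needed at this stage, since the result is purely a consequence of constraint inclusion in $D$.
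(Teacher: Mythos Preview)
Your proposal is correct and is essentially the same argument as the paper's: both observe that any optimal (or feasible) tuple for the larger $D$ automatically satisfies \eqref{Cons:target amount} for the smaller $D$ while all other constraints are $D$-independent, so the optimum cannot increase when $D$ decreases. Your feasible-set inclusion phrasing is just a slightly more abstract packaging of the paper's ``take the optimal solution at $D'$ and check it is feasible at $D<D'$'' step.
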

\begin{proof}
	For any given task amount, i.e., $D$ and $D'$, such that $D'>D$, we should prove that $\Gamma^{\star}(D')\geq\Gamma^{\star}(D)$. 
	Given $D'$, the optimal solution to $\mathscr{P}_1$ is denoted as $\{\Gamma^{\star}, \bm{A}^{\star},\bm{P}^{\star}, \bm{\eta}^{\star},\bm{Q}^{\star}\}$. 
	As the minimum task amount decreases to $D$, a feasible solution to $\mathscr{P}_1$ with the given $D$ denoting as $\{ \bar{\bm{A}}, \bar{\bm{P}}, \bar{\bm{\eta}}, \bar{\bm{Q}}\}$ can be constructed by letting $\bar{\bm{Q}} = \bm{Q}^{\star}$, $\bar{\bm{\eta}} = \bm{\eta}^{\star}$, $\bar{\bm{A}} = \bm{A}^{\star}$, and $ \bar{\bm{P}} = \bm{P}^{\star}$. In this case, the objective value  $\Gamma^{\star}(D') = \bar{\Gamma}(D)$. 
	$\bar{\Gamma}(D)$ is the objective value for a feasible solution $\{ \bar{\bm{A}}, \bar{\bm{P}}, \bar{\bm{\eta}}, \bar{\bm{Q}}\}$ to $\mathscr{P}_1$, which is obviously no less than the objective function for the optimal solution $\Gamma^{\star}(D)$. Thus, we have $\Gamma^{\star}(D') = \bar{\Gamma}(D) \geq \Gamma^{\star}(D)$, which completes the proof.
\end{proof}

For any given $D$, the MSE requirements $\{\varepsilon_l\}$  in each cluster  are achievable if and only if $\Gamma^{\star}(D)\leq 1$. 
Thus, it can be verified that problem $\mathscr{P}$ is equivalent to
\begin{subequations}
	\begin{eqnarray}
		\mathscr{P}_2:\mathop{\text{maximize}}\limits_{D}&& D\nonumber\\
		\text{subject to } 
		&& \Gamma^{\star}(D)\leq 1. \label{Cons:Gamma}
	\end{eqnarray}
\end{subequations} 
With such a transformation, we can concentrate on addressing problem $\mathscr{P}_2$ in the remainder of the paper.
Fortunately, based on Proposition \ref{Proposition:P1 feature}, problem $\mathscr{P}_2$ can be efficiently solved by applying the bisection search over the minimum amount of AirComp tasks  $D$ until the equality in \eqref{Cons:Gamma} holds.
As a result, the main difficulty of solving problem $\mathscr{P}_2$ lies in finding an efficient algorithm to solve problem $\mathscr{P}_1$ with any given amount $D$.  
Compared to the original problem $\mathscr{P}$,  problem $\mathscr{P}_1$ eliminates the max-min programming and discontinuous objective function.
Although it still involves
coupled optimization variables and non-convex constraints, such as \eqref{Cons:A binary}, \eqref{Cons:target MSE1}, and \eqref{Cons:Q collision}, problem $\mathscr{P}_1$ promotes the development of an efficient suboptimal algorithm, which will be elaborated in section \ref{Section:algorithm}.

\begin{remark}\label{Remark:original problem}
The equations of all the constraints in \eqref{Cons:target amount} hold for the optimal solution of problem $\mathscr{P}_1$.
If any equality of constraints in \eqref{Cons:target amount} is not met, we can always reduce the value of $\sum_{n=1}^N\sum_{m=1}^{M} a_{l,m}[n]$ when other variables are fixed, while all other constraints are still satisfied, and the objective value of $\mathscr{P}_1$ remains unchanged. 
\end{remark}

\section{Proposed Algorithm}\label{Section:algorithm}
In this section,  we develop an efficient iterative algorithm to solve problem $\mathscr{P}_1$ based on the principle of BCD and SCA techniques. 
Specifically, to tackle the coupling between optimization variables, we adopt the well-known BCD technique \cite{xu2013block} to decouple problem $\mathscr{P}_1$ into four subproblems, which optimize cluster association $\bm A$, transmit power $\bm P$, normalizing factors $\bm \eta$, and UAVs' trajectories $\bm Q$ in an iterative manner.
Besides the convex transmit power $\bm P$ and normalizing factors $\bm \eta$ optimization subproblems, we propose a low-complexity dual ascent method to obtain a closed-form binary solution for  cluster association optimization subproblem and exploit the SCA to deal with the non-convexity of  UAVs' trajectories optimization subproblem.

\subsection{Cluster Scheduling and Association Optimization}
\begin{algorithm}[t]
	\caption{Dual Method for Problem \eqref{Subproblem:A original}}\label{Algo:A solution}
	\begin{algorithmic}[1]
		\STATE {\textbf{Input:} $M$, $N$, $L$, $D$,  $\{{\sf mse}_{l,m,n}/\varepsilon_l, \forall l,  \forall m, \forall n \}$}.
		\STATE {Initialize dual variables $\{\lambda_{l,n} = 1/(KN)\}$, $ \{\beta_{l,n} =  0\}$, and $\{\nu_{m,n} = 0\}$.}
     	\REPEAT
		\STATE {Update the primal variables $\bm A$ and $\bm \Gamma$ according to  \eqref{Eq:A-solution} and \eqref{Eq:Gamma-solution}}.
		\STATE {Update the dual variables $\bm \lambda$, $\bm \beta$, and $\bm \nu$ according to \eqref{Eq:beta}-\eqref{Eq:lambda}.}
		\UNTIL{$\bm \lambda$, $\bm \beta$, and $\bm \nu$ converge within a prescribed accuracy}	
			\STATE {Set $\bm a^{\star}$ = $\bm a$ and $ \Gamma^{\star}$ = $\Gamma$}.
		\STATE {\textbf{Output:} $\bm a^{\star}$ and $ \Gamma^{\star}$}	 		
	\end{algorithmic}
\end{algorithm}
For given  $\{ \bm \eta, \bm Q, \bm P  \}$,  the cluster scheduling and association optimization subproblem is reduced to as follows.
\begin{subequations}\label{Subproblem:A original}
\begin{eqnarray}
\mathop{\text{minimize}}\limits_{\Gamma, \bm A} && \Gamma \nonumber\\
\text{subject to} &&\sum_{m=1}^Ma_{l,m}[n]\frac{{\sf{mse}}_{l,m,n} }{\varepsilon_l}\le \Gamma, \forall l,  \forall n, \label{Cons:target MSE-A} \\
	&&\text{Constraints~} 	 \eqref{Cons:A UAV}, \eqref{Cons:A cluster}, \eqref{Cons:A binary}, \eqref{Cons:target amount}. \nonumber
\end{eqnarray}	
\end{subequations}	
Problem \eqref{Subproblem:A original} is a binary linear programming optimization problem.
Although such a problem can be directly solved via CVX and solver (e.g., Mosek),
it requires prohibitively high computational complexity due to exhausted search, which is not practical even for the moderate sizes of  $N$ and $L$.
Alternatively, to gain more design insights and reduce the computational cost, by relaxing integer constraints \eqref{Cons:A binary} with $ a_{l,m}[n]\in[0,1]$, we exploit the Lagrange duality to solve problem \eqref{Subproblem:A original} with relaxed constraints.
 It will show that the obtained closed-form optimal solution ensures not only the dual optimality but also the feasibility of problem \eqref{Subproblem:A original}. 
Specifically, the partial Lagrangian of problem \eqref{Subproblem:A original} with relaxed constraints is given by
\begin{eqnarray}
	\mathcal{L}(\eta, \bm A, \bm \lambda, \bm \beta, \bm \nu) && = \Big(1-\sum_{l=1}^L\sum_{n=1}^N\lambda_{l,n}\Big) \Gamma 
	+ \sum_{l=1}^L\sum_{n=1}^N \sum_{m=1}^Ma_{l,m}[n]\Big(\lambda_{l,n}\frac{{\sf{mse}}_{l,m,n} }{\varepsilon_l} +\beta_{l,n}+\nu_{m,n}\Big) \nonumber \\
&&	- \sum_{l=1}^L\sum_{n=1}^N\beta_{l,n} - \sum_{m=1}^M\sum_{n=1}^N\nu_{m,n},
\end{eqnarray}	
where $\bm \lambda = \{\lambda_{l,n}, \forall l, \forall n\}$,  $\bm \beta = \{\beta_{l,n}, \forall l, \forall n\}$,  and  $\bm \nu = \{\nu_{m,n}, \forall m, \forall n\}$ are the Lagrange multipliers associated with constraints \eqref{Cons:target MSE-A},  \eqref{Cons:A UAV}, and \eqref{Cons:A cluster}.
Accordingly, the Lagrange dual function of problem \eqref{Subproblem:A original} with relaxed constraints is expressed as
\begin{subequations}\label{Subproblem:A dual function}
	\begin{eqnarray}
		g (\bm \lambda, \bm \beta, \bm \nu) = &&\min\limits_{\Gamma, \bm A} 	\mathcal{L}(\eta, \bm A, \bm \lambda, \bm \beta, \bm \nu)  \nonumber\\
	&& \sum_{n=1}^N\sum_{m=1}^{M} a_{l,m}[n] \geq D, \forall l,\\
	&& 0\leq a_{l,m}[n]\leq 1, \forall l, \forall m, \forall n. \label{Cons:bineary relax}
	\end{eqnarray}	
\end{subequations}	
which leads to the following lemma.
\begin{lemma} \label{Lemma:A-Gamma feature}
To make $g (\bm \lambda, \bm \beta, \bm \nu) $ in problem \eqref{Subproblem:A dual function}   bounded, it must follows that $(1-\sum_{l=1}^L\sum_{n=1}^N\lambda_{l,n}) = 0$.
\end{lemma}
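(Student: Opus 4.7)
The plan is to exploit the fact that the Lagrangian is linear in $\Gamma$ with a specific coefficient, while $\Gamma$ itself is unconstrained in the inner minimization that defines $g(\bm{\lambda}, \bm{\beta}, \bm{\nu})$. Inspecting $\mathcal{L}(\Gamma, \bm{A}, \bm{\lambda}, \bm{\beta}, \bm{\nu})$, the variable $\Gamma$ appears only in the leading term $(1-\sum_{l=1}^L\sum_{n=1}^N\lambda_{l,n})\Gamma$; all other terms depend solely on $\bm{A}$ and the dual variables. Moreover, the only primal constraints retained in problem \eqref{Subproblem:A dual function} are \eqref{Cons:target amount} and the relaxed box constraint \eqref{Cons:bineary relax}, both of which involve $\bm{A}$ alone. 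Hence $\Gamma$ ranges freely over $\mathbb{R}$ in the inner minimum.

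Given this structure, the argument is a two-line case split. If $1-\sum_{l,n}\lambda_{l,n} > 0$, I would take $\Gamma \to -\infty$ to drive $\mathcal{L}$ (and therefore $g$) to $-\infty$. Symmetrically, if $1-\sum_{l,n}\lambda_{l,n} < 0$, taking $\Gamma \to +\infty$ achieves the same unboundedness. Consequently, the only way for the inner minimum over $\Gamma$ to remain finite—hence for $g(\bm{\lambda}, \bm{\beta}, \bm{\nu})$ to be bounded—is to have the coefficient vanish, yielding $\sum_{l=1}^L\sum_{n=1}^N\lambda_{l,n} = 1$, which is the claim.

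There is no real obstacle here; the only thing to verify carefully is that the MSE constraints \eqref{Cons:target MSE-A} coupling $\Gamma$ and $\bm{A}$ have indeed been fully absorbed into the Lagrangian through $\bm{\lambda}$, so that $\Gamma$ is unconstrained in the inner problem. Once this observation is in place, the dichotomy above closes the proof. As a forward-looking remark, this lemma reduces the dual to a search over $\bm{\lambda}$ on the probability simplex (together with nonnegative $\bm{\beta},\bm{\nu}$), which is precisely the structure that will be leveraged next to eliminate $\Gamma$ from $g(\bm{\lambda},\bm{\beta},\bm{\nu})$ and derive the closed-form updates \eqref{Eq:A-solution}--\eqref{Eq:Gamma-solution} used in Algorithm~\ref{Algo:A solution}.
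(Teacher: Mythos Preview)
Your argument is correct and matches the paper's own proof essentially line for line: both observe that $\Gamma$ is unconstrained in the inner minimization and that the Lagrangian is linear in $\Gamma$, then drive $\Gamma\to\pm\infty$ according to the sign of $1-\sum_{l,n}\lambda_{l,n}$ to force $g\to-\infty$ unless the coefficient vanishes. Your additional remark that the constraints \eqref{Cons:target MSE-A} have been fully dualized (so $\Gamma$ is indeed free) is a helpful clarification the paper leaves implicit.
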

\begin{proof}
	This can be proved by contradiction. If $(1-\sum_{l=1}^L\sum_{n=1}^N\lambda_{l,n}) > 0$ or  $(1-\sum_{l=1}^L\sum_{n=1}^N\lambda_{l,n}) < 0$, then
	$g (\bm \lambda, \bm \beta, \bm \nu)\rightarrow -\infty$ by setting $\Gamma\rightarrow -\infty$ or $\Gamma\rightarrow \infty$. Therefore, both of the  two above inequalities cannot be true. This completes the proof.
\end{proof}
From Lemma \ref{Lemma:A-Gamma feature}, dual variables $\bm \lambda$ are subject to an additional constraint. 
Accordingly, the dual problem of \eqref{Subproblem:A original} with relaxed constraints is expressed as
\begin{subequations}\label{Subproblem:A dual problem}
	\begin{eqnarray}
	 \mathop{\text{maximize}}\limits_{\bm \lambda, \bm \beta, \bm \nu} &&	g (\bm \lambda, \bm \beta, \bm \nu)   \nonumber\\
	 \text{subject to}
		&& \sum_{l=1}^L\sum_{n=1}^N\lambda_{l,n} = 1, \bm \beta \succeq 0, \bm \nu \succeq 0.
	\end{eqnarray}	
\end{subequations}	
The next step is to apply the Lagrange duality to find the primal optimal solution.

\subsubsection{Obtaining $g (\bm \lambda, \bm \beta, \bm \nu) $}
First, we obtain the dual function $g (\bm \lambda, \bm \beta, \bm \nu) $ under given $\bm \lambda, \bm \beta, \bm \nu$ by solving problem \eqref{Subproblem:A dual function}.  
It is observed that we can decompose Problem \eqref{Subproblem:A dual function} into $L+1$ subproblems, each of which can be solved in parallel. 
Particularly, the $L$ subproblems are for optimizing $\bm A$, each of which is given by
\begin{subequations}\label{Subproblem:A dual function-A}
	\begin{eqnarray}
\mathop{\text{minimize}}\limits_{a_{l,m}[n], \forall m, \forall n} 	&&	\sum_{n=1}^N \sum_{m=1}^Ma_{l,m}[n]\Big(\lambda_{l,n}\frac{{\sf{mse}}_{l,m,n} }{\varepsilon_l} +\beta_{l,n}+\nu_{m,n}\Big) \nonumber\\
 \text{subject to}
		&& \sum_{n=1}^N\sum_{m=1}^{M} a_{l,m}[n] \geq D,\\
		&& 0\leq a_{l,m}[n]\leq 1,  \forall m, \forall n,
	\end{eqnarray}	
\end{subequations}	

To minimize the objective function in \eqref{Subproblem:A dual function-A},  which is a linear combination of $a_{l,m}[n]$, we should let the association coefficient corresponding to the UAV with the top $D$ smallest $\big(\lambda_{l,n}\frac{{\sf{mse}}_{l,m,n} }{\varepsilon_l} +\beta_{l,n}+\nu_{m,n}\big) $ be $1$ for any $l$. 
The optimal solution of problem \eqref{Subproblem:A dual function-A} is given by the following theorem.
\begin{theorem} \label{theorem:A-solution}
For problem \eqref{Subproblem:A dual function-A}, by denoting $x_{l,m}[n] = \lambda_{l,n}\frac{{\sf{mse}}_{l,m,n} }{\varepsilon_l} +\beta_{l,n}+\nu_{m,n}$, 
the optimal cluster scheduling and association $\bm A$ can be expressed as
\begin{eqnarray} \label{Eq:A-solution}
	&&\bm a_l^{\star}(\bm \pi_l(i))= 
	\left\{
	\begin{aligned}
		&1, i = 1,\ldots D, \\
		& 0, \  \text{otherwise},	
	\end{aligned}
	\right.
\end{eqnarray}
where vector $\bm a_l^{\star} \triangleq \big[a_{l,1}[1],\ldots, a_{l,1}[n], \ldots a_{l,1}[N],a_{l,2}[1],\ldots  a_{l,M}[N]\big]^{\sf T}\in \mathbb{R}^{MN\times 1}$,
vector $\bm x_l \triangleq \big[x_{l,1}[1],\ldots, x_{l,1}[n], \ldots x_{l,1}[N],\ldots,  x_{l,M}[N]\big]^{\sf T}\in \mathbb{R}^{MN\times 1}$, and permutation $\bm \pi_l = [\pi_l(1),\ldots, \pi_l(MN)]$ corresponds to the ascend order such that  $\bm x_l(\pi_l(1))\leq\cdots\leq\bm x_l(\pi_l(MN))$.
\end{theorem}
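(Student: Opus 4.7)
The plan is to exploit the structure of problem \eqref{Subproblem:A dual function-A}: it is a linear program in the relaxed variables $a_{l,m}[n]\in[0,1]$ with one additional cardinality-type constraint $\sum_{n,m}a_{l,m}[n]\ge D$, and the stated solution is nothing more than the greedy ``pick the $D$ smallest weights'' rule. The first step I would take is a sign analysis of the coefficient $x_{l,m}[n]=\lambda_{l,n}\,{\sf mse}_{l,m,n}/\varepsilon_l+\beta_{l,n}+\nu_{m,n}$. Since $\lambda_{l,n},\beta_{l,n},\nu_{m,n}\ge 0$ by dual feasibility, $\varepsilon_l>0$, and ${\sf mse}_{l,m,n}\ge 0$ by its definition as an expectation of a squared magnitude, we have $x_{l,m}[n]\ge 0$ for every $(l,m,n)$. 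This is the pivotal fact, because it means increasing any $a_{l,m}[n]$ can only increase the objective.

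Second, I would argue that the cardinality constraint is active at the optimum, i.e.\ $\sum_{n,m}a^{\star}_{l,m}[n]=D$. If $\sum_{n,m}a^{\star}_{l,m}[n]>D$, then because all $x_{l,m}[n]\ge 0$ I could decrease any positive $a^{\star}_{l,m}[n]$ by a small amount, keep feasibility, and strictly not increase the objective; by iteration we may reduce until equality (in fact until we hit $D$), so without loss of generality $\sum_{n,m}a^{\star}_{l,m}[n]=D$.

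Third, given that the total mass is exactly $D$, I would apply a standard exchange argument to show the mass should be concentrated on the $D$ indices with the smallest $x_{l,m}[n]$. Suppose for contradiction that there exist indices $\pi_l(i)$ with $i\le D$ and $\pi_l(j)$ with $j>D$ such that $a_{l,\pi_l(i)}<1$ and $a_{l,\pi_l(j)}>0$. Then $x_{l,\pi_l(i)}\le x_{l,\pi_l(j)}$ by the sorted order, and transferring an amount $\epsilon=\min\{1-a_{l,\pi_l(i)},a_{l,\pi_l(j)}\}>0$ from index $\pi_l(j)$ to index $\pi_l(i)$ preserves all constraints while changing the objective by $\epsilon(x_{l,\pi_l(i)}-x_{l,\pi_l(j)})\le 0$. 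Iterating this exchange drives the solution to the greedy form in \eqref{Eq:A-solution} without increasing the cost, which establishes optimality of that form.

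Finally, I would note that the resulting solution is automatically $\{0,1\}$-valued, so although \eqref{Subproblem:A dual function-A} was set up on the relaxed box $[0,1]$, the optimizer is integral and hence feasible for the original binary problem \eqref{Subproblem:A original}. This closes the relaxation gap at no cost. The main (minor) obstacle I anticipate is handling ties in the sorted weights, i.e.\ when several $x_{l,m}[n]$ coincide at the cutoff; this is resolved by observing that any choice of $D$ indices from the tied block yields the same objective value, so the stated permutation-based rule is optimal up to such ties.
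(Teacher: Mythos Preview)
Your proposal is correct and follows the same approach as the paper: both identify that the LP objective is a nonnegative linear combination of the $a_{l,m}[n]$, so the greedy rule of setting the $D$ entries with smallest coefficients to~$1$ is optimal. Your exchange-argument treatment is in fact more careful than the paper's, which simply asserts in one sentence before the theorem that ``we should let the association coefficient corresponding to the \ldots\ top $D$ smallest \ldots\ be $1$'' and then states the result.
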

Theorem \ref{theorem:A-solution} states that cluster $l$ should be associated with UAV $m$ at time slot $n$ when coefficient $x_{l,m}[n]$ is smaller. 
According to the expression of coefficient $x_{l,m}[n]$, 
the term $\lambda_{l,n}\frac{{\sf{mse}}_{l,m,n} }{\varepsilon_l}$ is the effect of  introducing the ratio between the instantaneous MSE ${\sf{mse}}_{l,m,n}$ that can be achieved for type-$l$ AirComp task  and the target MSE threshold $\varepsilon_l$ that should be achieved if cluster $l$ is associated with UAV $m$ at time slot $n$. Furthermore, the remaining term represents other effects due to the problem constraints.
When $x_{l,m}[n]$ is smaller, it means that it brings a smaller achievable ratio if cluster $l$ is associated with UAV $m$ at time slot $n$.

And one subproblem is for optimizing $\Gamma$, i.e.,
\begin{eqnarray}\label{Subproblem:A dual function-Gamma}
	\mathop{\text{minimize}}\limits_{\Gamma} 	&&(1-\sum_{l=1}^L\sum_{n=1}^N\lambda_{l,n}) \Gamma 
\end{eqnarray}	
Since $\sum_{l=1}^L\sum_{n=1}^N\lambda_{l,n} = 1$, the optimal solution of $\Gamma^{\star}$ of problem \eqref{Subproblem:A dual function-Gamma} can be any arbitrary real number.
Without loss of generality, we simply set 
\begin{eqnarray} \label{Eq:Gamma-solution}
\Gamma^{\star} = \max\{a_{l,m}[n]\frac{{\sf{mse}}_{l,m,n} }{\varepsilon_l},\forall l, \forall n, \forall m\}.
\end{eqnarray}

\subsubsection{Updating $\bm \lambda, \bm \beta, \bm \nu $} 
After obtaining $\bm A^{\star}$ and $\Gamma^{\star}$ for given $\bm \lambda, \bm \beta$, and $\bm \nu $, 
the optimal dual variables are obtained by solvingthe dual problem \eqref{Subproblem:A dual problem}. 
Since the dual function $g(\bm \lambda, \bm \beta, \bm \nu)$  is concave but non-differentiable, 
subgradient-based methods such as the subgradient projection method are adopted to update the dual variables $(\bm \lambda, \bm \beta, \bm \nu)$. 
Specifically, in the $t+1$-th iteration, the update of  $(\bm \lambda, \bm \beta, \bm \nu)$ is given by
\begin{eqnarray}
&&\beta_{l,n}^{t+1} =  \Big[ \beta_{l,n}^{t}  + \gamma(\sum_{m=1}^Ma_{l,m}[n]-1) \Big]^+,  \nu_{m,n}^{t+1}=  \Big[ \nu_{m,n}^{t}  + \gamma(\sum_{L=1}^La_{l,m}[n]-1) \Big]^+,\label{Eq:beta}\\
&& \lambda_{l,n}^{t+\frac{1}{2}} = \Big[ \lambda_{l,n}^{t}  + \gamma(\sum_{m=1}^Ma_{l,m}[n] (\frac{{\sf{mse}}_{l,m,n} }{\varepsilon_l}-\Gamma) \Big]^+, \lambda_{l,n}^{t+1} =  \lambda_{l,n}^{t+\frac{1}{2}}/(\sum_{l=1}^L\sum_{n=1}^N\lambda_{l,n}^{t+\frac{1}{2}}), \label{Eq:lambda}\
\end{eqnarray}	
where $\gamma$ is a dynamically chosen step-size sequence and $[a]^+ = \max(a,0)$.

\subsubsection{Constructing the optimal solution $\Gamma$} 
With the updated dual variables $\bm \lambda^{\star}$, $\bm \beta^{\star}$, and $\bm \nu^{\star}$, we need to construct the primal solutions $\Gamma^{\star}$ and $\bm A^{\star}$ to problem \eqref{Subproblem:A original} with relaxed constraints. 
The key observation is that with given $\bm \lambda^{\star}$, $\bm \beta^{\star}$, and $\bm \nu^{\star}$, the optimal solution $\bm A^{\star}$ can be uniquely obtained from \eqref{Eq:A-solution}. 
By substituting $\bm A^{\star}$ into the primal problem \eqref{Subproblem:A original},  it is evident that  $\Gamma^{\star}  = \max\{a_{l,m}[n]\frac{{\sf{mse}}_{l,m,n} }{\varepsilon_l},\forall l, \forall n, \forall m\}$. 

By iteratively optimizing primal variables $(\bm A, \Gamma)$ and dual variables $(\bm \lambda, \bm \beta, \bm \nu)$,  the optimal cluster scheduling and association variables are obtained.
Algorithm \ref{Algo:A solution} summarizes how to solve problem \eqref{Subproblem:A original} via the dual method. 
It is important to note that the  obtained solution $a^{\star}_{l,m}[n]$ is either 1 or 0 according to \eqref{Eq:A-solution}, even though we relax $a_{l,m}[n]$ as \eqref{Cons:bineary relax}. 
Consequently, the optimal solution to problem \eqref{Subproblem:A original} is obtained by using the dual method.

\subsection{Transmit Power Optimization}   
By substituting $\bm A$ into constraints \eqref{Cons:target MSE1}, ${\sf MSE}_{l,n}$ is rewritten as
\begin{eqnarray} \label{Eq:instant MSE_A}
	&&{\sf {MSE}}_{l,n} = 
	\left\{
	\begin{aligned}
		&{\sf mse}_{l,m,n}, (l,m,n) \in \mathcal{A}, \\
		& 0, \  \text{otherwise},	
	\end{aligned}
	\right.	
\end{eqnarray}
where $\mathcal{A}\triangleq\{(l,m,n)|a_{l,m}[n] = 1, \forall l, \forall m, \forall n\}$. 
Thus, for any given  $\{ \bm \eta, \bm Q, \bm A \}$,  the transmit power optimization problem reduces to the following subproblem:	
\begin{subequations}\label{Subproblem:P solved}
\begin{eqnarray}
	\mathop{\text{minimize}}_{\Gamma, \bm P}&&\Gamma \nonumber\\
	\text{subject to} 
	&&\sum_{k\in\mathcal{K}_l}\!\!\Big({\theta_{l,k,m}[n]\sqrt{p_k[n]}}\!-\!1\Big)^2 \!\!+\sum_{j \neq l}\sum_{i\in\mathcal{K}_{j}}\theta_{l,i,m}^2[n]p_i[n] + \phi_{l,m}[n] \leq\Gamma \varepsilon_l K_l^2, 
	\forall (l,m,n) \!\in \!\mathcal {A}, \label{Cons:target MSE-P} \nonumber \\ \\
	&&\text{Constraints\ } \eqref{Cons:power}, \nonumber
\end{eqnarray}	
\end{subequations}
where $\theta_{l,k,m}[n] \triangleq \eta_{l,m}[n]|h_{k,m}[n]|, \forall l, \forall m, \forall k$ and $\phi_{l,m}[n] \triangleq \eta_{l,m}^2[n]\sigma^2, \forall l, \forall m, \forall k$.
Note that the number of constraints in \eqref{Cons:target MSE1} is $LN$, whereas the number of constraints in \eqref{Cons:target MSE-P} is $LD$.
Problem \eqref{Subproblem:P solved} is a quadratically constrained quadratic programming (QCQP), which can be solved via CVX with interior-point solvers (e.g., Mosek).
\begin{remark}\label{power remark}
	From constraints \eqref{Cons:target MSE-P}, to minimize $\Gamma$,  if there is no UAV associated with cluster $j$, i.e, $\sum_{m\in\mathcal{M}}a_{j,m}[n]=0$, then the transmission power of device $i\in\mathcal{K}_{j}$ must be zero (i.e., $p_i[n]=0$).
	Otherwise, it will introduce interference for UAVs and thus lead to a larger $\Gamma$.
\end{remark}

\subsection{Normalizing Factors Optimization} 
For given  $\{\bm A, \bm Q, \bm P \}$,  the normalizing factor optimization  subproblem is reduce to the following formulation:
\begin{subequations} \label{Subproblem:eta original}
	\begin{eqnarray}
		\mathop{\text{minimize}}_{\Gamma,\bm{\eta}} && \Gamma \nonumber\\
		\text{subject to}  
		&&\sum_{m=1}^M\frac{a_{l,m}[n]}{K_l^2}\big({\eta_{l,m}[n]\varphi_{k,m}[n]}-1\big)^2 +\eta_{l,m}^2[n]\big(I_{l,m}[n] + \sigma^2\big) \leq\Gamma \varepsilon_l K_l^2, \forall l, \forall n, \label{Cons:target MSE-eta}\\
		&&\text{Constraints\ } \eqref{Cons:normalizing factor},\nonumber
	\end{eqnarray}
\end{subequations}
where $\varphi_{k,m}[n] = |h_{k,m}[n]|\sqrt{p_k[n]}$.
Problem \eqref{Subproblem:eta original}  is also a convex QCQP. Furthermore, we can observe that problem \eqref{Subproblem:eta original} can be decoupled into $LD$ subproblems, each of which optimizes $\eta_{l,m}[n] $ with $a_{l,m}[n] = 1$ to minimize ${\sf{mse}_{l,m,n}}$. 
For any $(l,m,n)$ with $a_{l,m}[n] = 1$,  the subproblem is given by 
 \begin{eqnarray} \label{Subproblem:eta solved}
 	\mathop{\text{minimize}}\limits_{\eta_{l,m}[n]\ge0}  \sum_{k\in\mathcal{K}_l}\big({\eta_{l,m}[n]\varphi_{k,m}[n]}-1\big)^2 +\eta_{l,m}^2[n]\big(I_{l,m}[n] +\sigma^2\big).
\end{eqnarray}
And the optimal solution of problem \eqref{Subproblem:eta original} is given by  the following proposition.
\begin{proposition}
	Given by $\{ \bm A, \bm Q, \bm P   \} $, 
	by setting the first derivative of the objective function to zero,
	the optimal solution $\bm \eta$ to problem \eqref{Subproblem:eta solved}   is expressed as
\begin{eqnarray} \label{Solution:eta}
\eta_{l,m}^{\star}[n] = 
	\left\{
\begin{aligned} &\frac{\sum\limits_{k\in\mathcal{K}_{l}}\sqrt{p_k[n]}|h_{k,m}[n]|}{\sum\limits_{k\in\mathcal{K}_{l}}p_k[n]|h_{k,m}[n]|^2+ \sum_{j \in\mathcal{L}\backslash\{l\}}\sum_{i\in\mathcal{K}_{j}}p_i[n]|h_{i,m}[n]|^2+ \sigma^2}, \text{if}\ a_{l,m}[n] = 1, \\
	& 0, \  \text{otherwise}.		
\end{aligned}
\right.	
\end{eqnarray}
By substituting the solution $\eta_{l,m}^{\star}[n] $ into  problem  \eqref{Subproblem:eta original}, the optimal solution $\Gamma$ is given by
\begin{eqnarray} \label{Solution:eta-Gamma}
\Gamma^{\star} = \max\limits_{\forall l, \forall n}\sum_{m=1}^M\frac{a_{l,m}[n]}{K_l^2}\Big[\sum_{k\in\mathcal{K}_l}\Big(\eta_{l,m}[n]\sqrt{p_k[n]}|h_{k,m}[n]|-1\Big)^2 +\eta_{l,m}^2[n]\Big(I_{l,m}[n] + \sigma^2\Big)\Big].
\end{eqnarray}
\end{proposition}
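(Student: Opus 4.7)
The plan is to exploit the decoupled structure of problem \eqref{Subproblem:eta original} across the tuples $(l,m,n)$, reduce each active component to a one-dimensional convex quadratic minimization, and verify that the stationary point is interior (non-negative), so that the first-order condition yields the claimed closed form.

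First, I would observe that both the objective $\Gamma$ and the MSE constraints \eqref{Cons:target MSE-eta} decouple across indices: the left-hand side of \eqref{Cons:target MSE-eta} for a given $(l,n)$ involves only the single scalar $\eta_{l,m}[n]$ with $a_{l,m}[n]=1$ (by constraint \eqref{Cons:A UAV} at most one $m$ satisfies this for fixed $(l,n)$), while for inactive tuples, the corresponding term vanishes since $a_{l,m}[n]=0$. Thus any $\eta_{l,m}[n]$ with $a_{l,m}[n]=0$ enters neither the objective nor any constraint in a meaningful way, so we may set it to zero without loss of optimality, consistent with \eqref{Solution:eta}. For each active triple $(l,m,n)\in\mathcal{A}$, the minimization of $\Gamma$ subject to its single binding MSE constraint amounts to minimizing the left-hand side of \eqref{Cons:target MSE-eta} over $\eta_{l,m}[n]\geq 0$, which is precisely problem \eqref{Subproblem:eta solved}.

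Next, I would verify convexity of the scalar objective in \eqref{Subproblem:eta solved}: it is a sum of squared affine terms in $\eta_{l,m}[n]$ plus a non-negative quadratic $\eta_{l,m}^2[n](I_{l,m}[n]+\sigma^2)$, hence a strictly convex quadratic with strictly positive leading coefficient $\sum_{k\in\mathcal{K}_l}\varphi_{k,m}^2[n]+I_{l,m}[n]+\sigma^2>0$. Expanding and differentiating with respect to $\eta_{l,m}[n]$ gives a single linear stationarity equation whose unique root is
\begin{equation*}
\eta_{l,m}^{\star}[n] = \frac{\sum_{k\in\mathcal{K}_l}\varphi_{k,m}[n]}{\sum_{k\in\mathcal{K}_l}\varphi_{k,m}^2[n]+I_{l,m}[n]+\sigma^2},
\end{equation*}
which, upon substituting $\varphi_{k,m}[n]=|h_{k,m}[n]|\sqrt{p_k[n]}$ and the definition of $I_{l,m}[n]$, recovers the expression in \eqref{Solution:eta}. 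Since the numerator and denominator are both non-negative and the denominator is strictly positive (thanks to $\sigma^2>0$), the root is itself non-negative, so the constraint $\eta_{l,m}[n]\geq 0$ is inactive and the unconstrained stationary point is indeed the constrained optimum.

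Finally, I would recover $\Gamma^{\star}$ by noting that at the optimum of problem \eqref{Subproblem:eta original} the objective is tight with the maximum across all $(l,n)$ of the left-hand side of \eqref{Cons:target MSE-eta}, for otherwise $\Gamma$ could be decreased while maintaining feasibility. Substituting the closed-form $\eta_{l,m}^{\star}[n]$ back into the expression then yields \eqref{Solution:eta-Gamma}. The main obstacle, which is rather mild here, is making the decoupling argument rigorous by pointing out that the MSE constraint for each cluster at each time slot depends on at most one $\eta_{l,m}[n]$, so the coupled QCQP splits into the $LD$ independent scalar convex problems \eqref{Subproblem:eta solved}; once this structural observation is made, the remaining calculus is standard.
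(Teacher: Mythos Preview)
Your proposal is correct and follows essentially the same approach as the paper: the text preceding the proposition already notes that problem \eqref{Subproblem:eta original} decouples into $LD$ scalar subproblems \eqref{Subproblem:eta solved}, and the proposition itself invokes the first-order condition, which is exactly what you carry out. Your additional verification that the stationary point is non-negative (so the constraint $\eta_{l,m}[n]\geq 0$ is inactive) and your explicit argument that $\Gamma^{\star}$ equals the maximum over the binding constraints are useful details that the paper leaves implicit.
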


\begin{remark}\label{eta remark}
	Note that  with  $a_{l,m}[n] = 1$, 
	the normalizing factor $\eta_{l,m}^{\star}[n] $ monotonically
	decreases with  respect to the  noise power $\sigma^2$ and the inter-cluster interference power from other devices associated with other UAVs, i.e., $\sum_{j \in\mathcal{L}\backslash\{l\}}\sum_{i\in\mathcal{K}_{j}}p_i[n]|h_{i,m}[n]|^2$.
	From \eqref{Solution:eta-Gamma}, we observe that reducing $\eta_{l,m}^{\star}[n]$ can suppress the noise-induced error and inter-cluster interference induced error components but increase the signal misalignment error.
\end{remark}

\subsection{UAV Trajectory Optimization}
For given  $\{ \bm A, \bm P, \bm\eta \}$,  the UAVs' trajectories subproblem  is reduced to the following subproblem:
\begin{eqnarray}\label{Subproblem:Q original}
	\mathop{\text{minimize}}\limits_{\Gamma,\bm{Q}} && \Gamma \nonumber\\
	\text{subject to}  
	&&\text{Constraints\ } \eqref{Cons:Q position},\eqref{Cons:Q speed}, \eqref{Cons:Q collision}, \eqref{Cons:target MSE1}. 
\end{eqnarray}	
Due to the nonconvexity of constraints \eqref{Cons:Q collision} and \eqref{Cons:target MSE1}, problem \eqref{Subproblem:Q original} is  nonconvex. 
Generally speaking, there is no efficient method that can be used to attain the optimal solution. 
To address their non-convexity, we exploit the SCA technique in the sequel. 
We first transform constraints \eqref{Cons:target MSE1} into a tractable form, which facilitates the development of the SCA technique.
We define
\begin{eqnarray} 
	\!\!\!\!\!\!\!\!\!\! G_{l,m}[n] &=& \eta^2_{l,m}[n]\sum_{l'\in \mathcal{L}}  \sum\limits_{k\in\mathcal{K}_{l'}}\frac{p_k[n]\beta_0}{(H^2+ \left\|\bm q_{m}[n]-\mathbf{w}_k\right\|_2^2)^{\frac{\gamma}{2}}},  \\
	\!\!\!\!\!\!\!\!\!\!    F_{l,m}[n] &=&
	2\eta_{l,m}[n]\sum\limits_{k\in\mathcal{K}_{l}}\dfrac{\sqrt{\beta_0 p_k[n]}}{(H^2 + \left\|\bm q_{m}[n]-\mathbf{w}_k\right\|_2^2)^{\frac{\gamma}{4}}},\\
	\!\!\!\!\!\!\!\!\!\!	C_{l,m}[n] &=&
	\left\{
	\begin{aligned}
		&K+ \eta^2_{l, m}[n]\sigma^2,~\text{if~}a_{l,m}[n] =1,  \\
		& 0, \  \text{otherwise}.		
	\end{aligned}
	\right.	
\end{eqnarray}
 Based on the above introduced functions, we substitute  $\bm A$ into ${\sf MSE}_{l,n}$ in constraints \eqref{Cons:target MSE1}, which is given by
\begin{eqnarray} \label{Eq:instant MSE_Q}
&&{\sf {MSE}}_{l,n} = 
	\left\{
	\begin{aligned}
		& G_{l,m}[n] + C_{l,m}[n] - F_{l,m}[n], (l,m,n) \in \mathcal{A}, \\
		& 0, \  \text{otherwise}.	
	\end{aligned}
	\right.	
\end{eqnarray}
With \eqref{Eq:instant MSE_Q}, constraints \eqref{Cons:target MSE1} are transformed into
\begin{eqnarray} \label{Cons:target MSE_Q}
	G_{l,m}[n] + C_{l,m}[n] - F_{l,m}[n]\leq \Gamma \varepsilon_l K_l^2,  \forall (l,m,n) \in \mathcal{A}.
\end{eqnarray}
Although $G_{l,m}[n]$ and $F_{l,m}[n]$  in constraints \eqref{Cons:target MSE_Q} are non-convex w.r.t $\bm{q}_m[n]$, they are convex w.r.t $\|\bm{q}_m[n]-\mathbf{w}_k\|_2^2$. 
Base on the key observation, we first introduce slack variables $\mathbf{S} = \{s_{k,m}[n]\triangleq\left\|\bm q_{m}[n]-\mathbf{w}_k\right\|_2^2\ |\ \forall k, \forall m, \forall n\}$ to tackle the non-convexity of $G_{l,m}[n]$ in constraints \eqref{Cons:target MSE_Q}.
The reformulated  problem \eqref{Subproblem:Q original} is given by
\begin{subequations}\label{Subproblem:Q reduced}
	\begin{eqnarray}
		\mathop{\text{minimize}}\limits_{\Gamma,\textbf{Q}} && \Gamma \nonumber \\
		\text{subject to } && 
		\!\!\! \hat{G}_{l,m}[n] + C_{l,m}[n] -F_{l,m}[n] \le \varepsilon_l \Gamma, \forall (l, m, n)\in \mathcal{A},  \label{Cons:target MSE-QS}\\
		&&  s_{k,m}[n]\le\left\|\bm q_{m}[n]-\mathbf{w}_k\right\|_2^2,\forall k, \forall m, \forall n, \label{Cons:S slack}\\
		&&\text{Constraints\ } \eqref{Cons:Q position},\eqref{Cons:Q speed}, \eqref{Cons:Q collision},  \nonumber
	\end{eqnarray}	
\end{subequations}
where 
\begin{eqnarray} 
	\hat{G}_{l,m}[n] &=& \eta^2_{l,m}[n]\sum_{l'\in \mathcal{L}}  \sum\limits_{k\in\mathcal{K}_{l'}}\frac{p_k[n]\beta_0}{(H^2+ s_{k,m}[n])^{\frac{\gamma}{2}}}.
\end{eqnarray} 
Note that for problem \eqref{Subproblem:Q reduced}, it can be easily verified that  all constraints in \eqref{Cons:S slack} can be met with equality.
Otherwise,  when other variables are fixed, we can increase the value of $\bm S$ to further decrease the value of $\{\hat{G}_{l,m}[n]\}$ without violating all constraints in problem \eqref{Subproblem:Q reduced}, thereby reducing the objective function.
However, problem \eqref{Subproblem:Q reduced} is still a nonconvex optimization problem because of the nonconvexity of constraints  \eqref{Cons:target MSE-QS}, \eqref{Cons:S slack}, and \eqref{Cons:Q collision}.
\begin{algorithm}[t]
	\caption{BCD-SCA Algorithm for $\mathscr{P}_1$}	\label{Algo:BCD-SCA}
	\begin{algorithmic}[1]
		\STATE {\textbf{Input}}: $D$, tolerance $\epsilon>0$.
		\STATE {{Initialize}}: $\bm P^{0}$, $\bm \eta^{0}$, and $\bm Q^{0}$, let $r = 0$.
		\REPEAT		
		\STATE  Given $\!\{\!\bm P^{r}, \bm \eta^{r}, \bm Q^{r}\!\}$, solve problem \eqref{Subproblem:A original} with Algo. \ref{Algo:A solution}, and denote the solution as $\bm A^{r+1}$.
		\STATE   Given $\!\{\!\bm A^{r+1}, \bm \eta^{r}, \bm Q^{r}\!\}$, solve convex problem \eqref{Subproblem:P solved}, and the  solution is denoted as $\bm P^{r+1}$.
		\STATE   Given $\!\{\!\bm A^{r+1}, \bm P^{r+1}, \bm Q^{r}\!\}$, solve problem \eqref{Subproblem:eta original} based on \eqref{Solution:eta}, denote the solution as $\bm \eta^{r+1}$.
		\STATE  Given $\!\{\!\bm A^{r+1}, \bm P^{r+1},\bm \eta^{r+1}\}$, solve problem \eqref{Subproblem:Q solved},  and denote the  solution as $\bm Q^{r+1}$.
		\STATE Set $r = r+1$.
		\UNTIL The fractional decrease of the objective value is below a  tolerance $\epsilon$.	
		\STATE {\textbf{Output}}:   $\Gamma$, $\bm A^{r}$, $\bm P^{r}$, $\bm \eta^{r}$, and $\bm Q^{r}$.
	\end{algorithmic}
	\label{subAlgo}
\end{algorithm}  	

Fortunately, $F_{l,m}[n]$ in constraints \eqref{Cons:target MSE-QS} is convex w.r.t $\|\bm{q}_m[n]-\mathbf{w}_k\|_2^2$.
And $\|\bm{q}_m[n]-\mathbf{w}_k\|_2^2$ in constraints \eqref{Cons:S slack} and
$\left\|\bm q_m[n] - \bm q_i[n] \right\|_2$ in constraints  \eqref{Cons:Q collision}
are convex w.r.t $\bm Q$.
It is important to recall that the first-order Taylor expansion of any function at any point severs as a lower bound   \cite{Boyd2004convex}.  
Thus, similarly to \cite{Wu2018MultiUAV}, the SCA technique is adopted to tackle their non-convexity, thus obtaining a suboptimal solution.
Specifically, defining  $\bm Q^r$ as the output of the $r$-th iteration,
we obtain that $F_{l,m}[n]$ is lower bounded by the following expression, i.e.,
\begin{eqnarray}\label{Geq:F}
	 F_{l,m}[n]&\ge& F^r_{l,m}[n] \!+\! \nabla_{\|\bm q_{m}[n]-\mathbf{w}_k\|^2}F_{l,m}[n] \big|{\bm q_{m}[n] 
	 = \bm q_{m}^r[n]}\big( \left\|\bm q_{m}[n]-\mathbf{w}_k\right\|_2^2 -\left\|\bm q_{m}^r[n]\!-\!\mathbf{w}_k\right\|_2^2\big) \nonumber\\
	&\triangleq& F^{\text{lb}}_{l,m}[n], 
\end{eqnarray}
where
\begin{eqnarray} 
	\!\!\!\!\!\!\!\!\!&&\nabla_{\|\bm q_{m}[n]-\mathbf{w}_k\|^2}F_{l,m}[n] \big|_{\bm q_{m}[n] = \bm q_{m}^r[n]}  =
	\sum\limits_{k\in\mathcal{K}_{l}}-\dfrac{{\eta_{l,m}[n]}\gamma\sqrt{p_k[n]\beta_0}}{2(H^2+\left\|\bm q_{m}^r[n]-\mathbf{w}_k\right\|_2^2)^{\frac{\gamma+4}{4}}}.
\end{eqnarray}
It is easily verified that $F^{\text{lb}}_{l,m}[n]$ is concave with regard to $\bm q_{m}[n]$. Hence, $\hat{G}_{l,m}[n] + C_{l,m}[n] -F^{\text{lb}}_{l,m}[n]$ is a convex function. 
Likewise, with the first-order Taylor expansion at  point $\bm q_{m}^r[n]$, $\left\|\bm q_{m}[n]-\mathbf{w}_k\right\|_2^2$ in constraints \eqref{Cons:S slack} are subject to
\begin{eqnarray}\label{Geq:S slack}
	\left\|\bm q_{m}[n]-\mathbf{w}_k\right\|_2^2 &\ge& \left\|\bm q^r_{m}[n]-\mathbf{w}_k\right\|_2^2 + 2({\bm  q}^r_{m}[n]-\mathbf{w}_k)^T(\bm q_{m}[n]-\bm q^r_{m}[n]) \triangleq D^{\text{lb}}_{k,m}[n].
\end{eqnarray}
Note that the lower bound function $D^{\text{lb}}_{k,m}[n]$ is a linear function w.r.t $\bm q_{m}[n]$.
Likewise,  $\left\|\bm q_m[n] -\bm q_i[n] \right\|_2^2$ in constraint \eqref{Cons:Q collision} is lower-bounded by
\begin{eqnarray}\label{Geq:Q collision}
	\!\!	\left\|\bm q_m[n] -\bm q_i[n] \right\|^2 &\ge&-\left\|\bm q^r_{m}[n]-\bm q_i^r[n]\right\|^2 + 2(\bm q^r_{m}[n] -\bm q_i^r[n])^T(\bm q_{m}[n]-\bm q_{i}[n])\triangleq d^{\text{lb}}_{m,i}[n].
\end{eqnarray}
Given point $\bm Q^r$, based on  expressions  \eqref{Geq:F}, \eqref{Geq:S slack}, and \eqref{Geq:Q collision},  problem \eqref{Subproblem:Q reduced} approximates as follows:
\begin{subequations}\label{Subproblem:Q solved}
	\begin{eqnarray}
		\mathop{\text{minimize}}\limits_{\Gamma,\textbf{Q}} && \Gamma \nonumber \\
		\text{subject to } && 
		\!\!  \hat{G}_{l,m}[n] \!+\! C_{l,m}[n] \!-\!F^{\text{lb}}_{l,m}[n]\!\le\! \varepsilon_l \Gamma,\!\forall (l,m, n)\in \mathcal{A}, \label{Cons:target MSE SCA}\\
		&& 0\leq s_{k,m}[n]\le D^{\text{lb}}_{k,m}[n],\forall k, \forall m, \forall n,\label{Cons:S slack SCA}\\
		&& d^{\text{lb}}_{m,i}[n] \geq d_{\text{min}}, \forall n, \forall m, m\neq i, \label{Cons:Q collision SCA}\\
		&&\text{Constraints~} \eqref{Cons:Q position}, \eqref{Cons:Q speed}.\nonumber
	\end{eqnarray}	
\end{subequations}	
Since constraints in \eqref{Cons:target MSE SCA}, \eqref{Cons:S slack SCA}, and \eqref{Cons:Q collision SCA} are jointly convex w.r.t $\bm Q$ and $\bm S$ now, problem \eqref{Subproblem:Q solved} is  convex,  which can be solved via modeling framework CVX and interior-point solvers (e.g., Mosek).
Based on the lower bounds in \eqref{Cons:target MSE SCA}, we can conclude that any feasible solution to problem \eqref{Subproblem:Q solved} is also feasible to problem \eqref{Subproblem:Q original}. 
Additionally,  $-F^{\rm lb}_{l,m}[n]$ is an upper bound of $-F_{l,m}[n]$.
Therefore, the optimal objective value of the approximate problem \eqref{Subproblem:Q solved} is generally an upper bound to problem \eqref{Subproblem:Q original}.

	   \begin{algorithm}[t]
	   	\caption{Overall Bisection Algorithm for $\mathscr{P}_2$}	\label{Algo:main}
	   	\begin{algorithmic}[1]
	   		\STATE {\textbf{Input}}: $D_{\min} = 0$, $D_{\max} = \lfloor\frac{TM}{\delta L}\rfloor$.
	   		\STATE {{Initialize}}: $\bm P^{0}$ , $\bm \eta^{0}$, and $\bm Q^{0}$, let $t =1$.
	   		\REPEAT		
	   		\STATE	{Update $D = \lceil  \frac{D_{\min}+ D_{\max}}{2}  \rceil$. Initialize $\bm P^{0}$ and $\bm \eta^{0}$.}
	   		\STATE Obtain  $\Gamma^t$, $\bm A^t$, $\bm P^t$, $\bm \eta^t$, and $\bm Q^t$ using Algorithm \ref{Algo:BCD-SCA}.
	   		\IF{$\Gamma^t\leq 1$}
	   		\STATE Set $D_{\min} = D$. And let $ D^{\star} = D$, $\bm A^{\star} = \bm A^t$, $\bm P^{\star} = \bm P^t$, $\bm \eta^{\star} = \bm \eta^t$, and $\bm Q^{\star} = \bm Q^t$. 
	   		\ELSE	
	   		\STATE Set $D_{\max} = D$.	
	   		\ENDIF
	   		\STATE Update $t = t+1$.
	   		\UNTIL $D_{\max} - D_{\min} \leq 1$.	
	   		\STATE {\textbf{Output}}:   $D^{\star}$, $\bm A^{\star}$, $\bm P^{\star}$ , $\bm \eta^{\star}$, and $\bm Q^{\star}$.
	   	\end{algorithmic}
	   \end{algorithm}

 \subsection{Proposed Overall  Algorithm}
 The BCD-SCA method for solving $\mathscr{P}_1$ is summarized in Algorithm \ref{Algo:BCD-SCA}, where the cluster association, transmit power, normalizing factors, and UAV trajectories are successively optimized while keeping the other variables fixed until convergence. 
 In addition, the obtained solution at current iteration will be applied to be the input of the next iteration.
 And, the computational complexity of solving problem \eqref{Subproblem:A original} is $\mathcal{O}(LMN)$.
 The complexity of solving problem \eqref{Subproblem:P solved} is $\mathcal{O}(K^{3.5}L^{1.5}D^{1.5})$.
 The complexity of solving problem \eqref{Subproblem:eta original} is $\mathcal{O}(KMN)$.
 The complexity of solving problem \eqref{Subproblem:Q original} is $\mathcal{O}(K^{1.5}M^{3.5}N^{3.5})$.
 According to the aforementioned complexity analysis of each subproblem, the overall complexity of Algorithm \ref{Algo:BCD-SCA} is $\mathcal{O}((K^{3.5}L^{1.5}D^{1.5}+K^{1.5}M^{3.5}N^{3.5})log(1/\epsilon))$.
 
 Recall that problem $\mathscr{P}_2$ can be efficiently solved by applying the bisection search over the minimum amount of AirComp tasks $D$, then the overall bisection algorithm for $\mathscr{P}_2$ is summarized in Algorithm \ref{Algo:main}.
 Although the obtained solution is generally suboptimal, we show  the effectiveness of our proposed algorithm in increasing the number of AirComp tasks among all clusters via numerical simulations, and compare it to other benchmarks in Section \ref{Section:simulation}.
 
 Since our proposed algorithm is an iterative algorithm, in the following, we present the initial procedures for transmit power,  normalizing factors, and trajectories.    
{\it{1) Transmit power initialization:}}
 The transmit powers of the ground devices are initialized by the equal transmit power in all time slots, i.e., $p_k[n] = P_{k}/D, \forall k, \forall n$.
 {\it{2) Trajectory initialization:}} The trajectory  is initialized by a
 simple circular trajectory scheme, which is detailed in \cite{Wu2018MultiUAV}.
 {\it{3) Normalizing factor initialization:}} Given the initial transmit power and trajectories, the initial normalizing factors can be obtained by computing \eqref{Solution:eta}.

\section{Numerical Results}\label{Section:simulation}
In this section, numerical results are provided to verify performance gain of the proposed design in terms of the max-min AirComp task amount and the effectiveness of Algorithm \ref{Algo:main}.
We consider $L=6$ clusters.
Therein, each cluster has $|K_\ell| = 20, \forall \ell$ ground devices, each of which randomly distributes in a circular area with a radius of $r = 50$ meters, where the centers of circles are set to 
$(x_{C_1}, y_{C_1}, z_{C_1}) =(100, 50, 0)$ meters, $(x_{C_2}, y_{C_2}, z_{C_2}) = (200, 200, 0)$ meters, $(x_{C_3}, y_{C_3}, z_{C_3}) =(-100, 100, 0)$ meters, $(x_{C_4}, y_{C_4}, z_{C_4}) =(-400, 150, 0)$ meters, $(x_{C_5}, y_{C_5}, z_{C_5}) =(-200, -200, 0)$ meters, and $(x_{C_6}, y_{C_6}, z_{C_6}) =(-250, -100, 0)$ meters.
The UAVs fly at a fixed altitude $H = 100$ meters to comply with the rule that all commercial UAVs should not fly over $400$ feet ($122$ meters) \cite{FAA2016UAV}. 
In addition, the minimum distance between any two UAVs is set as $d_{\text{min}} = 100$ meters \cite{Wu2018MultiUAV}. 
The maximum speed of all UAVs is assumed to be the same and set as $V_{\text{max}} = 30 $ m/s. 
Time step size $\delta$ is set as $0.5$ s, which is small enough to satisfy $\delta V_{\text{max}}\ll H$. 
The noise power at the receiver and the channel power gain at the reference distance of $d_0 = 1$ m are set as $\sigma^2 = -80$ dBm and $\beta_0 = -50$ dB, respectively.
The target AirComp MSE threshold for all clusters is set as $\varepsilon_l = 2\times 10^{-3}, \forall l$.
Let the threshold $\epsilon$ in Algorithm \ref{Algo:BCD-SCA} be $10^{-3}$.

\subsection{Multi-Cluster AirComp with a Single UAV}
In this subsection, to show the superiority of our proposed design and the effectiveness of Algorithm \ref{Algo:main} for solving the minimum amount of AirComp tasks maximization problem, we first consider a special case that there is only one UAV serving ground devices, i.e., $M=1$. 
In this case, the system is free of inter-cluster interference.

\begin{figure}[t]
	\centering
	\begin{minipage}{.48\textwidth}
		\centering
		\includegraphics[width=8cm,height=6cm]{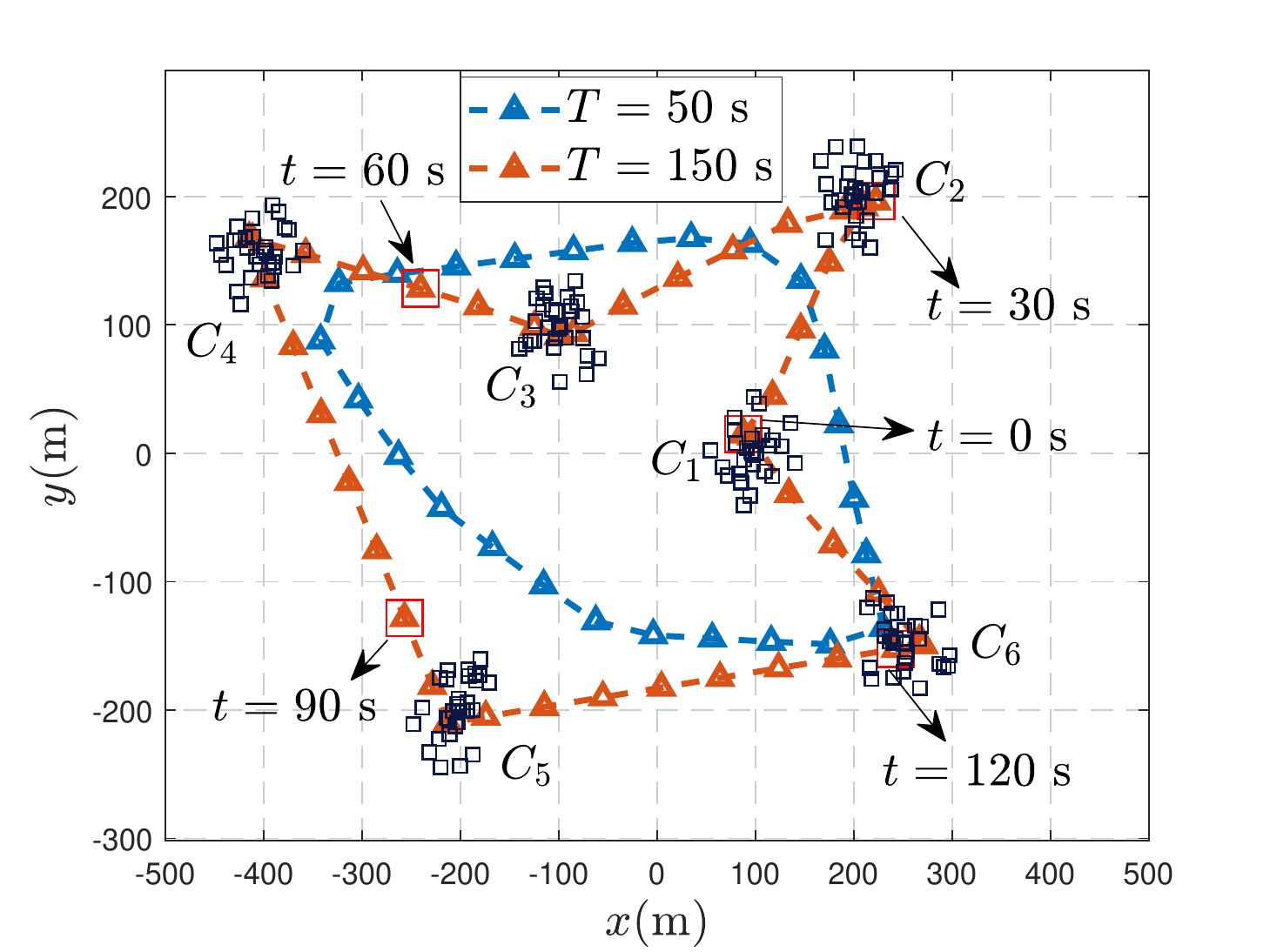}
		\vspace{-8mm}
		\caption{Optimized UAV trajectories in a single UAV case.}
	\label{Fig:traM1}
		\vspace{-6mm}
	\end{minipage}
	\hspace{4mm}
	\begin{minipage}{.48\textwidth}
		\centering
		\includegraphics[width=8cm,height=6cm]{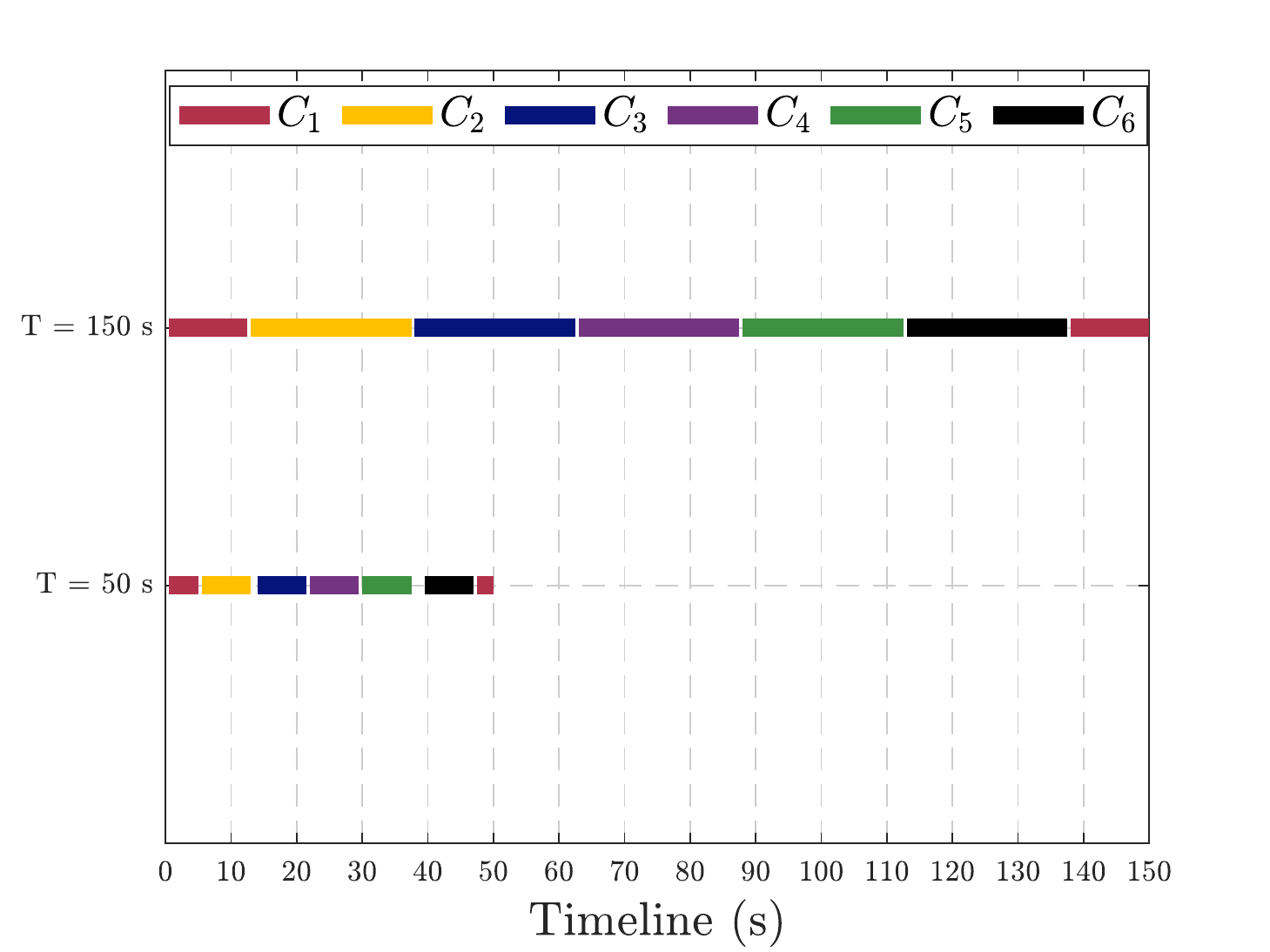}
		\vspace{-8mm}
		\caption{Cluster scheduling schemes over timeline.}
		\label{Fig:stateM1}
		\vspace{-6mm}
	\end{minipage}
\end{figure}

Fig. \ref{Fig:traM1} illustrates the obtained UAV trajectories optimized by using Algorithm \ref{Algo:main} with varying mission duration $T$ when $E=0.8$ W. 
Each trajectory is sampled every three seconds marked with ``$\triangle$'' by using the same colors.
The locations of devices are marked by dark blue ``$\square$".
We observe that, when $T$ is large, the UAV makes use of its maneuverability and adjusts its trajectory to approach its associated cluster as much as possible.
In particular, the UAV hovers over its associated cluster for a certain amount of time, except for the minimum time spent in flight between clusters.
Under the given mission duration $ T= 150$ s, the UAV can move sufficiently close to each cluster in sequence.
To this end, the optimized UAV trajectory contains line segments, which connect the points on the top of all clusters.

Fig. \ref{Fig:stateM1} shows the corresponding cluster scheduling along the timeline. 
The different colored rectangles represent different clusters, and their lengths represent how long these clusters are scheduled.
It is observed that for both $T = 50$ s and $T = 150$ s, the UAV visits each cluster in turn and associates with each cluster within an equal time period to complete the corresponding AirComp task under the target MSE requirements, which is in accordance with the observed results in Fig. \ref{Fig:traM1}.

To show the performance gain of our proposed joint design, we consider the following benchmarks.
\begin{itemize}
	\item \textbf{Static UAV}: This scheme fixes the location of UAV on the geometric point of all devices and only optimizes $\bm A$, $\bm \eta$, and $\bm P$ by using Algorithm \ref{Algo:main}.
	\item 	\textbf{Equal  power transmission}:   At each scheduled time slot, each device is allocated equal power (i.e., $p_k[n] = P/D$).
	Only three variables (i.e., $\bm A$, $\bm \eta$, and $\bm Q$) are optimized by using Algorithm \ref{Algo:main}.
	\item 	 \textbf{Upper bound}:  In this scheme, we assume that all devices have sufficient power budgets and the system is interference-free. Therefore, the maximum task amount for each cluster can be achieved, i.e.,
	\begin{eqnarray}\label{Eq:upper Bound}
		D^{\text{ub}} = \lfloor\frac{MT}{\delta L}\rfloor.
	\end{eqnarray}
\end{itemize}

\begin{figure}[t]
	\centering
	\begin{minipage}{.48\textwidth}
		\centering
		\includegraphics[width=8cm,height=6cm]{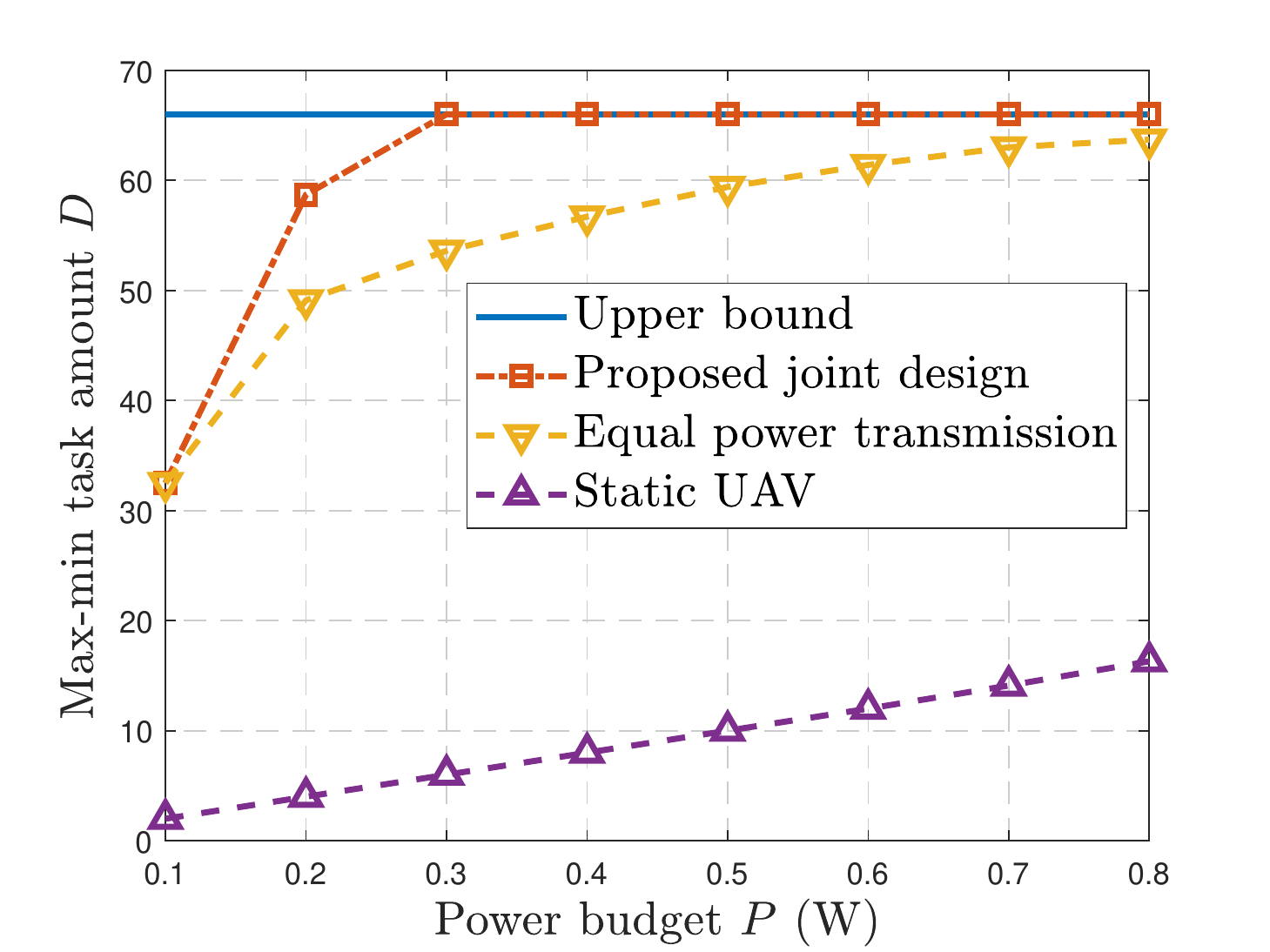}
		\vspace{-8mm}
		\caption{Max-min AirComp task amount versus power budget $P$ in a single UAV case.}
		\label{Fig:powerM1}
		\vspace{-6mm}
	\end{minipage}
	\hspace{4mm}
	\begin{minipage}{.48\textwidth}
		\centering
		\includegraphics[width=8cm,height=6cm]{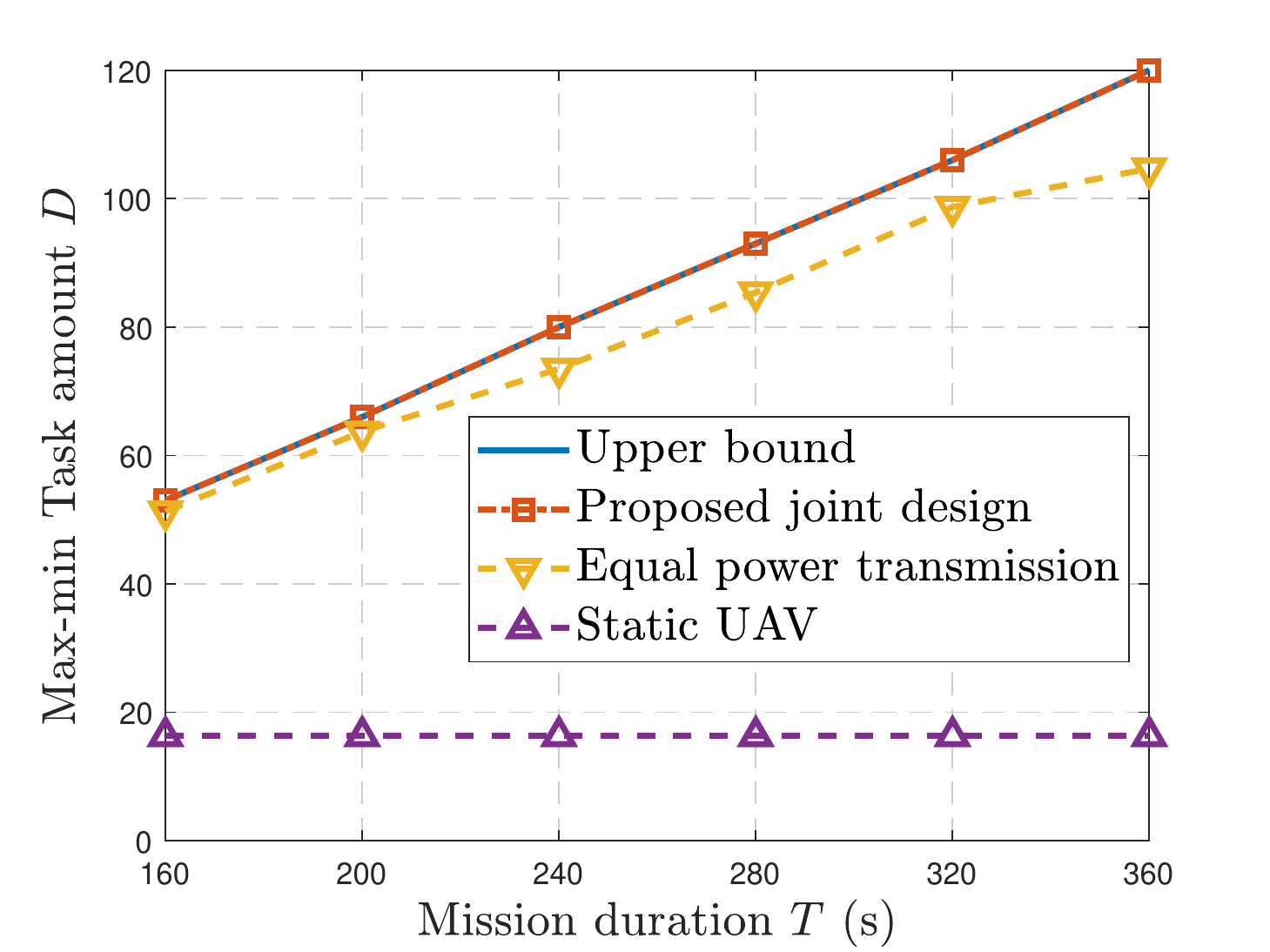}
		\vspace{-8mm}
		\caption{Max-min AirComp task amount versus mission duration $T$  in a single UAV case.}
		\label{Fig:timeM1}
		\vspace{-6mm}
	\end{minipage}
\end{figure}

Fig. \ref{Fig:powerM1} shows the average max-min task amount versus the total power budget when $T=200$ s. 
First, with the fixed mission duration, observing from \eqref{Eq:upper Bound}, the upper bound of max-min AirComp task amount is fixed.
With the increase of power budget $P$, the max-min task amount achieved by all three schemes increases.
This is because, with a larger $P$, the ground devices can afford more energy to compensate channel fading to reduce the achievable AirComp MSE.
However, as long-distance transmission may lead to deep fading, the performance gap between the static UAV scheme and upper bound is large even in the relatively large $P$ (i.e., $P = 0.8$ W).
Conversely, by utilizing UAV mobility, the proposed joint design scheme and the equal power transmission scheme significantly outperform the static UAV scheme. 
This is because the mobile UAV dynamically constructs favorable communication channels to avoid deep attenuation.
In addition, the performance of the proposed joint design is the best and reaches the upper bound when $P$ is relatively small (i.e., $P = 0.3$ W).
To achieve uniform powers of received signals, higher power is required to compensate for worse channel conditions. 
Using identical power transmission results in several misaligned signals since all devices' channel conditions change due to UAV mobility, especially when the power budget is small but ineffectively allocated.
Fortunately, the proposed joint design scheme exploits the synergy of trajectory design and power control to improve signal alignment.
All the above results illustrate the importance and necessity of the joint design in maximizing the max-min AirComp task amount.
This demonstrates the effectiveness of Algorithm \ref{Algo:main} for solving the joint design optimization problem $\mathscr{P}_2$.
It shows that in the single-UAV case without inter-cluster interference, the max-min task amount can achieve the upper bound by increasing the transmit power.

 Fig. \ref{Fig:timeM1} illustrates the average max-min task amount versus the mission duration $T$ when $P=0.8$ W.
 In the case of the static UAV, performance is independent of mission duration because of the time-invariant channel conditions between the UAV and devices.
 Conversely, with optimized trajectory design, the max-min AirComp task amount achieved by the proposed joint design scheme and the equal power transmission scheme increase with $T$ increasing. 
 There are two main reasons.
 First, the UAV with the optimized flight trajectory can establish good enough channel conditions with its associated cluster.
 Second, as $T$ increases, the UAV can spend more time performing tasks from its associated cluster under the established favorable channel conditions.
 We can further observe that the performance gap between the upper bound and the equal power transmission scheme increases as $T$ increases.
 However, thanks to effectively allocating power over all time slots, the proposed joint design scheme can reach the upper bound from $T = 160$ s to $T = 360$ s.
 Nevertheless, we can observe that there is a fundamental tradeoff between performance and delay.
 In the case of a single UAV, an increased max-min AirComp task amount means an increased access delay since the UAV sequentially visits each cluster and thus the cluster needs to wait longer for communication until the previous scheduled clusters to complete task amount.


\begin{figure}[t]
	\centering
	\begin{minipage}{.48\textwidth}
		\centering
		\includegraphics[width=8cm,height=6cm]{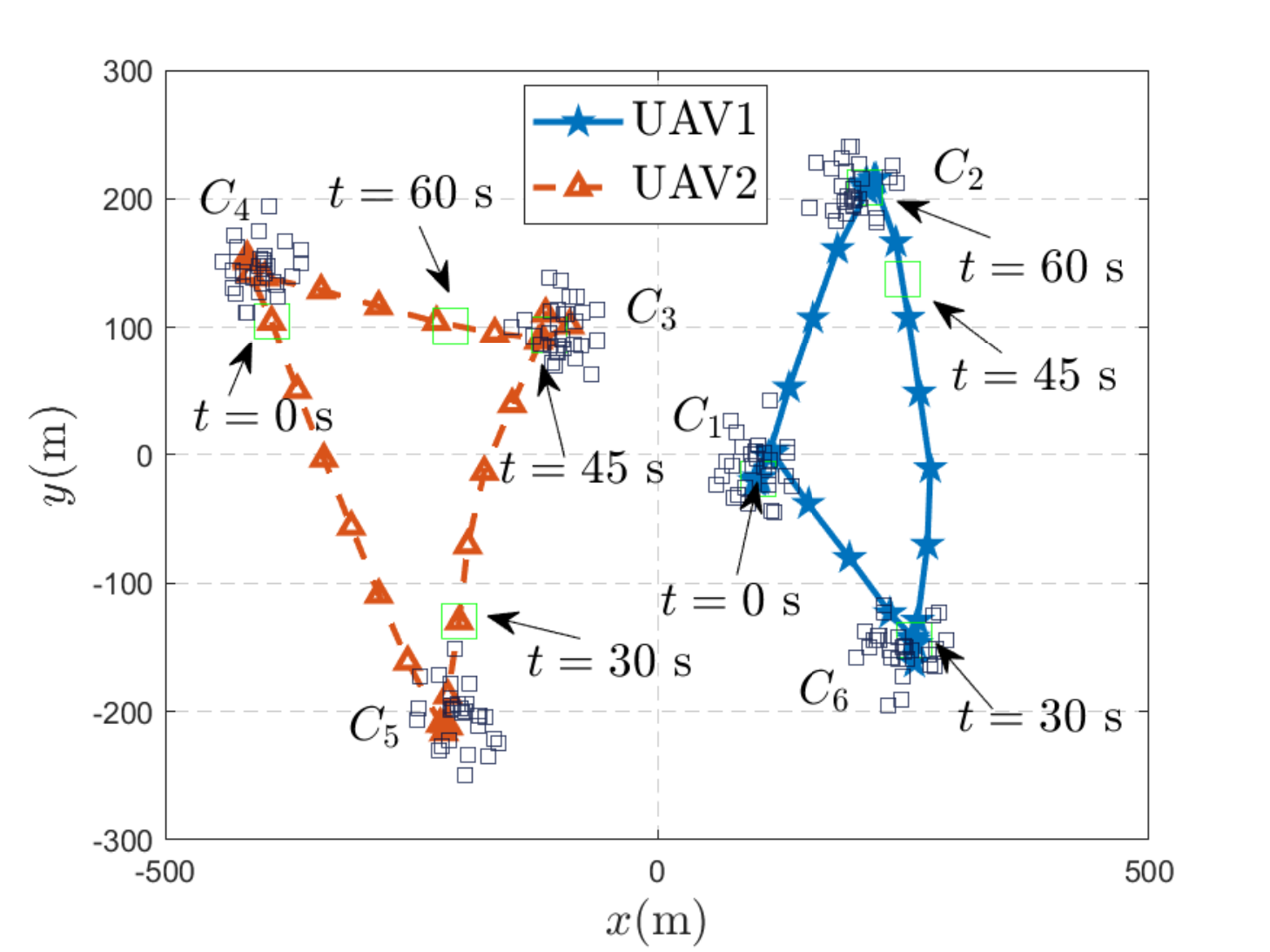}
		\vspace{-8mm}
		\caption{ UAV trajectories optimized by the proposed joint design in two-UAVs case.}
		\label{Fig:traM2}
		\vspace{-6mm}
	\end{minipage}
	\hspace{4mm}
	\begin{minipage}{.48\textwidth}
		\centering
		\includegraphics[width=8cm,height=6cm]{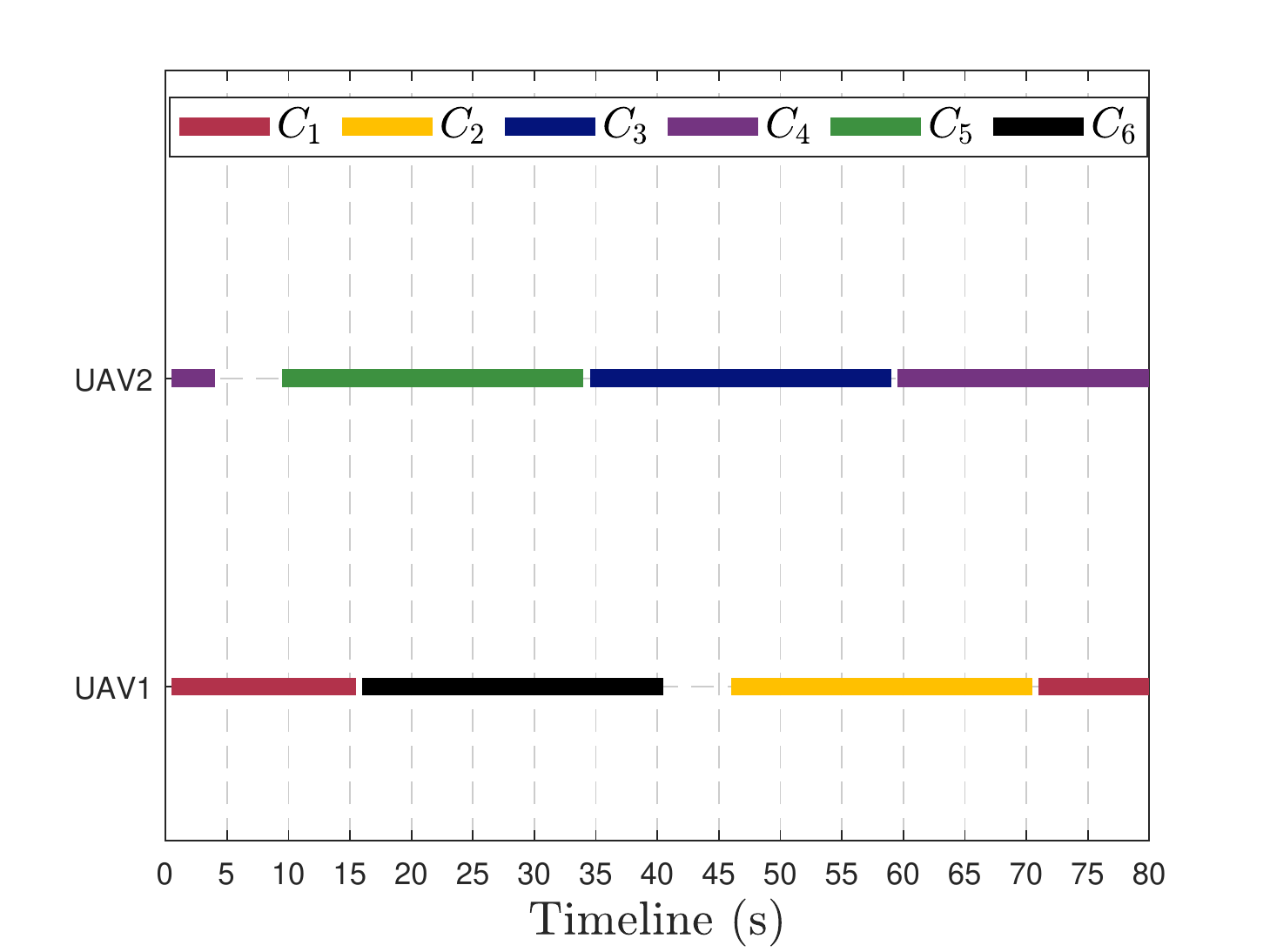}
		\vspace{-8mm}
		\caption{The corresponding cluster-UAV scheduling and association schemes over timeline  in two-UAVs case.}
		\label{Fig:stateM2}
		\vspace{-6mm}
	\end{minipage}
\end{figure}
\subsection{Multi-Cluster AirComp with Multiple UAVs}
Next, we study the multiple UAV case to illustrate its further performance gain.
To validate the effectiveness of our proposed cooperative interference management, here we take double UAVs as a simple example.
Additionally, our proposed algorithm is also valid for more than two UAVs.
We compare the performance of the proposed design to three benchmarks.
In the static UAV scheme, the UAVs are located at the geometric points of devices in clusters 1, 2, 6 and 3, 4, 5, respectively.
The following orthogonal UAV transmission scheme is included for comparison.
\begin{itemize}
	\item \textbf{Orthogonal transmission}: 
	The UAVs provide services to clusters via TDMA without inter-cluster interference. 
	In this scheme, the binary variables are subject to
	\begin{eqnarray}
		&&\sum\limits_{l=1}^L\sum\limits_{m=1}^M a_{l,m}[n] \leq 1,   n\in\mathcal{N}; a_{l,m}[n]\in\{0,1\}, \forall l\in\mathcal{L},m\in\mathcal{M},n\in\mathcal{N}. 
	\end{eqnarray}	
	This problem is a special instance of problem \eqref{Problem:original}, which can be solved  by using a similar Algorithm \ref{Algo:main}.
\end{itemize}

Fig. \ref{Fig:traM2} shows the trajectories of double UAVs optimized by the proposed design when $T = 80$ s and $P = 0.8$ W. 
To better observe phenomena, we also plot the corresponding cluster scheduling and association schemes along the timeline in Fig. \ref{Fig:stateM2}.
From Fig. \ref{Fig:traM2} and Fig. \ref{Fig:stateM2}, it can be observed that each cluster is served by the same UAV and each UAV only needs to be responsible for half of clusters. 
From Fig. \ref{Fig:stateM2}, we can observe that there is at least one UAV to complete one AirComp task at each time slot.
Additionally, at some time intervals (e.g., from $t=0$ s to $t=3$ s and from $t=10$ s to $t = 30$ s), there are two scheduled clusters, which means two UAVs actually work simultaneously to deal with different AirComp tasks. 
Furthermore, observed from Fig. \ref{Fig:traM2} and Fig. \ref{Fig:stateM2}, with the cluster-UAV scheduling and association design, to avoid strong co-channel interference, in most time slots, UAVs tend to choose to serve two clusters that are geographically far away from each other.
As a result, the two UAVs can properly enlarge their trajectories to move closer to their associated cluster for better channel conditions establishment while achieving weak co-channel interference for the other cluster. 
For example, as shown in Fig. \ref{Fig:stateM2}, from $t=70$ s to $t = 80$ s (i.e., $t = 0$ s), UAV $2$ associates with cluster $C_4$ while UAV $1$ associates with cluster $C_1$.
During this time interval, from Fig. \ref{Fig:traM2}, we can observe that UAV $2$ and UAV $1$ are respectively closer to cluster $C_4$ and cluster $C_1$.
In order to complete as many tasks as possible for each cluster, two UAVs will at some point have to serve two clusters simultaneously when flying.
For example, as shown in Fig. \ref{Fig:stateM2}, from $t = 35$ s to $t = 40$ s, UAV $2$ associates with cluster $C_3$ while UAV $1$ associates with cluster $C_6$.
In these time instants, the optimized trajectories show that UAVs tend to stay away from each other to reduce co-channel inter-cluster interference.
This is why the UAVs fly along arc paths with opposite opening directions during that time interval.

\begin{figure}[t]
	\centering
	\begin{minipage}{.48\textwidth}
		\centering
		\includegraphics[width=8cm,height=6cm]{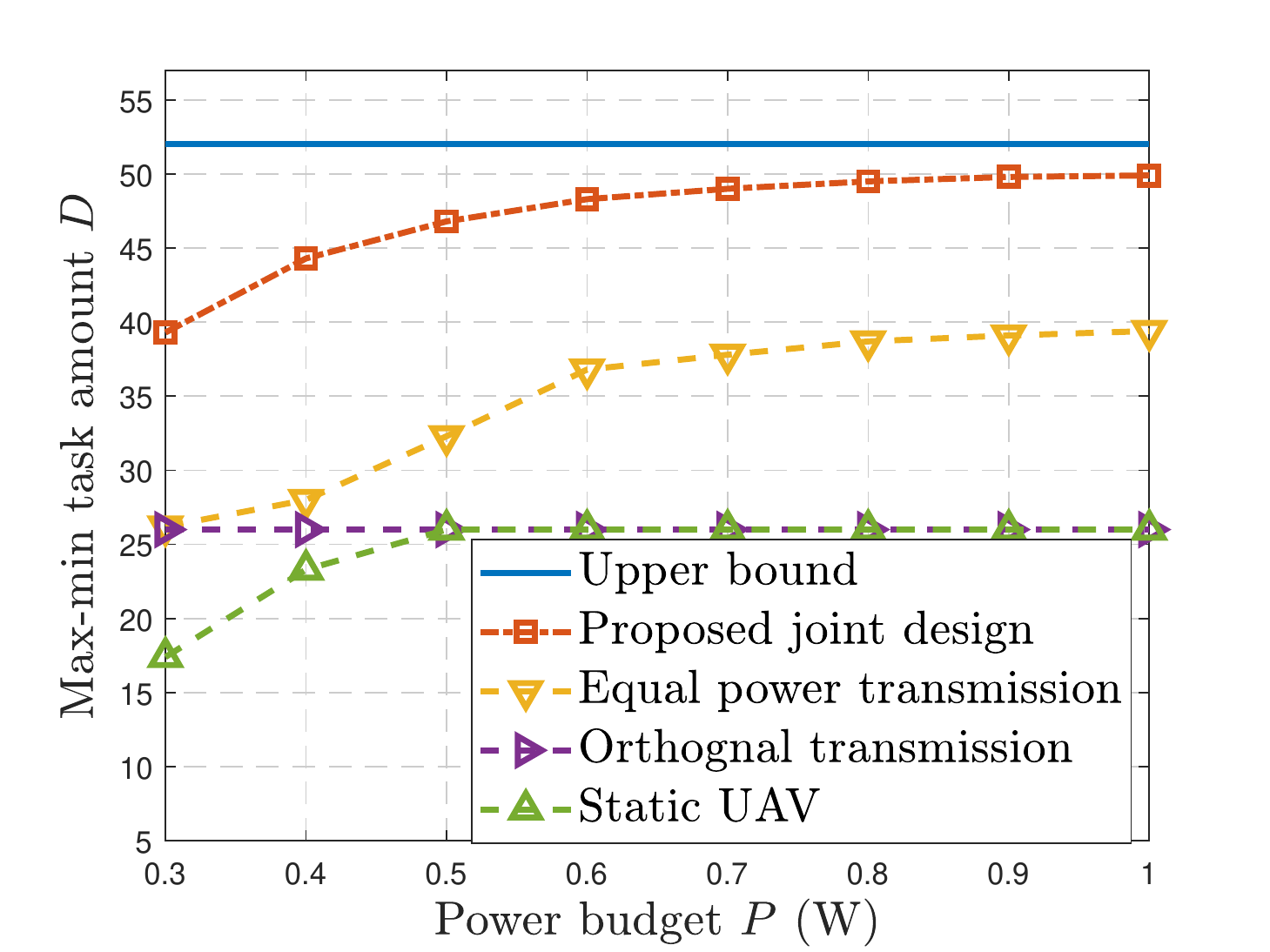}
		\vspace{-8mm}
		\caption{Max-min task amount versus power budget $P$ in two-UAVs case.}
		\label{Fig:powerM2}
		\vspace{-6mm}
	\end{minipage}
	\hspace{4mm}
	\begin{minipage}{.48\textwidth}
		\centering
		\includegraphics[width=8cm,height=6cm]{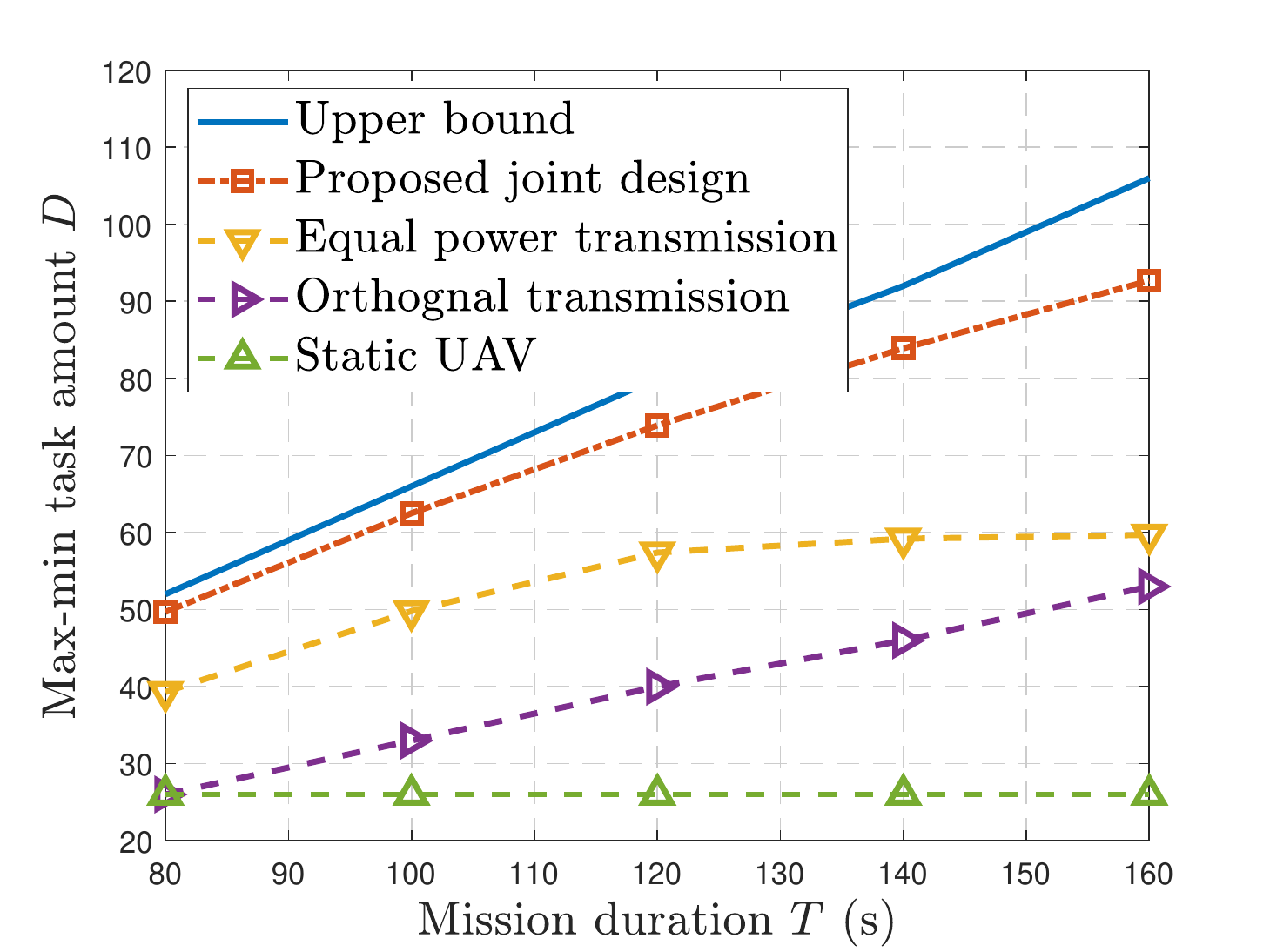}
		\vspace{-8mm}
		\caption{Max-min task amount versus mission duration $T$  in two-UAVs case.}
		\label{Fig:timeM2}
		\vspace{-6mm}
	\end{minipage}
\end{figure}

Fig. \ref{Fig:powerM2} shows the average max-min task amount versus device power budget when $T=80$ s. 
First, due to effective interference management, the proposed joint design scheme outperforms all benchmarks with all regimes considered.
Second, at the low power budget regime, the performance gaps between the proposed joint design and other benchmarks are relatively large.
This implies the effectiveness of power control optimization in suppressing the noise-induced error that is dominant for the achievable MSE in the low power budget regime.
Moreover, the proposed joint design and the equal-power transmission scheme outperform the orthogonal transmission scheme, demonstrating the benefit of cooperative transmission over clusters.
And their performance gaps become large as the power budget increases.
In addition, it is observed that all schemes suffering from inter-cluster interference become saturated in the high power budget regime, meaning that the achievable MSE performance is limited by the inter-cell interference and cannot be reduced further by simply increasing the transmit power.

Fig. \ref{Fig:timeM2} shows the average max-min task amount versus the mission duration $T$ when $P=0.8$ W in the case of two UAVs.
It can be observed that the performance of static UAV is the worst due to long distance transmission.
Compared to the curves of the proposed joint design scheme in Fig. \ref{Fig:timeM1}, it can be observed that with the same given $T$, the deployment of more than one UAV with cooperative interference management achieves a larger max-min task amount, thereby improving the performance-delay tradeoff.
For example,  when $T=160$ s, max-min AirComp task amount achieved by the proposed joint design in the single-UAV system is 53, whereas that in the two-UAV system exceeds 92.

\section{Conclusion}\label{Section:conclusion}

We study the minimum task amount maximization problem in a UAV-aided AirComp system with multiple clusters, taking into account cluster scheduling and association, the UAV trajectory design, and AirComp transceiver design.
The joint design aims to align signals within each cluster while mitigating the detrimental inter-cluster interference, thereby improving WDA performance.
By applying the Bisection technique, the formulated mixed-integer non-convex optimization problem is reduced to a sequence of the maximum ratio minimization problems.
This resulting problem is further solved by developing an efficient algorithm by applying BCD, Lagrange dual ascent, and SCA techniques.
Simulation results demonstrate significant performance gains of our proposed joint design over the benchmark schemes. 



\bibliographystyle{IEEEtran}
\bibliography{ref} 

\begin{thebibliography}{10}
\providecommand{\url}[1]{#1}
\csname url@samestyle\endcsname
\providecommand{\newblock}{\relax}
\providecommand{\bibinfo}[2]{#2}
\providecommand{\BIBentrySTDinterwordspacing}{\spaceskip=0pt\relax}
\providecommand{\BIBentryALTinterwordstretchfactor}{4}
\providecommand{\BIBentryALTinterwordspacing}{\spaceskip=\fontdimen2\font plus
\BIBentryALTinterwordstretchfactor\fontdimen3\font minus
  \fontdimen4\font\relax}
\providecommand{\BIBforeignlanguage}[2]{{%
\expandafter\ifx\csname l@#1\endcsname\relax
\typeout{** WARNING: IEEEtran.bst: No hyphenation pattern has been}%
\typeout{** loaded for the language `#1'. Using the pattern for}%
\typeout{** the default language instead.}%
\else
\language=\csname l@#1\endcsname
\fi
#2}}
\providecommand{\BIBdecl}{\relax}
\BIBdecl

\bibitem{letaief2022edge}
K.~B. Letaief, Y.~Shi, J.~Lu, and J.~Lu, ``Edge artificial intelligence for
  {6G}: Vision, enabling technologies, and applications,'' \emph{IEEE J. Sel.
  Areas Commun.}, vol.~40, no.~1, pp. 5--36, Jan. 2022.

\bibitem{agiwal2016next}
M.~Agiwal, A.~Roy, and N.~Saxena, ``Next generation {5G} wireless networks: A
  comprehensive survey,'' \emph{IEEE Commun. Surveys Tut.}, vol.~18, no.~3, pp.
  1617--1655, 3rd Quart., 2016., 2016.

\bibitem{Shi2020Communication}
Y.~Shi, K.~Yang, T.~Jiang, J.~Zhang, and K.~B. Letaief,
  ``Communication-efficient edge {AI}: Algorithms and systems,'' \emph{IEEE
  Commun. Surv. Tutorials}, vol.~22, no.~4, pp. 2167--2191, 2020.

\bibitem{Zhu2020WDA}
G.~Zhu, J.~Xu, K.~Huang, and S.~Cui, ``Over-the-air computing for wireless data
  aggregation in massive {IoT},'' \emph{IEEE Wirel. Commun.}, vol.~28, no.~4,
  pp. 57--65, Aug. 2021.

\bibitem{Golden2013AirComp}
M.~Goldenbaum, H.~Boche, and S.~Stańczak, ``Harnessing interference for analog
  function computation in wireless sensor networks,'' \emph{IEEE Trans. Signal
  Process.}, vol.~61, no.~20, pp. 4893--4906, Oct. 2013.

\bibitem{Molinari2021Consensus}
F.~{Molinari}, N.~{Agrawal}, S.~{Stanczak}, and J.~{Raisch}, ``Max-consensus
  over fading wireless channels,'' \emph{IEEE Trans. Control Netw. Syst.},
  vol.~8, no.~2, pp. 791--802, Jun. 2021.

\bibitem{Liu2019AirCompTWC}
W.~Liu, X.~Zang, Y.~Li, and B.~Vucetic, ``Over-the-air computation systems:
  Optimization, analysis and scaling laws,'' \emph{IEEE Trans. Wireless
  Commun.}, vol.~19, no.~8, pp. 5488--5502, Dec. 2020.

\bibitem{Yang2020FL}
K.~{Yang}, T.~{Jiang}, Y.~{Shi}, and Z.~{Ding}, ``Federated learning via
  over-the-air computation,'' \emph{IEEE Trans. Wireless Commun.}, vol.~19,
  no.~3, pp. 2022--2035, Mar. 2020.

\bibitem{Wang2021FLRISTWC}
Z.~Wang, J.~Qiu, Y.~Zhou, Y.~Shi, L.~Fu, W.~Chen, and K.~B. Letaief,
  ``Federated learning via intelligent reflecting surface,'' \emph{IEEE Trans.
  Wireless Commun.}, vol.~21, no.~2, pp. 808--822, Feb. 2022.

\bibitem{Cao2020Optimized}
X.~{Cao}, G.~{Zhu}, J.~{Xu}, and K.~{Huang}, ``Optimized power control for
  over-the-air computation in fading channels,'' \emph{IEEE Trans. Wireless
  Commun.}, vol.~19, no.~11, pp. 7498--7513, Nov. 2020.

\bibitem{Chen2018UniformForcing}
L.~Chen, X.~Qin, and G.~Wei, ``A uniform-forcing transceiver design for
  over-the-air function computation,'' \emph{{IEEE} Wireless Commun. Lett.},
  vol.~7, no.~6, pp. 942--945, Dec. 2018.

\bibitem{Fang2021AirComp}
W.~Fang, Y.~Jiang, Y.~Shi, Y.~Zhou, W.~Chen, and K.~B. Letaief, ``Over-the-air
  computation via reconfigurable intelligent surface,'' \emph{IEEE Trans.
  Commun.}, vol.~69, no.~12, pp. 8612--8626, Dec. 2021.

\bibitem{Zhu2019Aircompmobility}
G.~Zhu and K.~Huang, ``{MIMO} over-the-air computation for high-mobility
  multimodal sensing,'' \emph{IEEE Internet Things J.}, vol.~6, no.~4, pp.
  6089--6103, Aug. 2019.

\bibitem{Yuan2021RIS}
X.~{Yuan}, Y.~J. {Angela Zhang}, Y.~{Shi}, W.~{Yan}, and H.~{Liu},
  ``Reconfigurable-intelligent-surface empowered wireless communications:
  Challenges and opportunities,'' \emph{IEEE Wireless Commun.}, vol.~28, no.~2,
  pp. 136--143, Apr. 2021.

\bibitem{Fu2021Intelligent}
M.~{Fu}, Y.~{Zhou}, Y.~{Shi}, and K.~B. Letaief, ``Reconfigurable intelligent
  surface empowered downlink non-orthogonal multiple access,'' \emph{IEEE
  Trans. Commun.}, vol.~69, no.~6, pp. 3802--3817, Jun. 2021.

\bibitem{Fu2021UAV}
M.~Fu, Y.~Zhou, Y.~Shi, T.~Wang, and W.~Chen, ``{UAV}-assisted over-the-air
  computation,'' in \emph{Proc. IEEE Int. Conf. Commun. (ICC)}, 2021.

\bibitem{Fu2021UAVAirComp}
M.~Fu, Y.~Zhou, Y.~Shi, W.~Chen, and R.~Zhang, ``{UAV} aided over-the-air
  computation,'' \emph{IEEE Trans. Wireless Commun.}, 2021, doi:
  10.1109/TWC.2021.3134327.

\bibitem{Wen2022joint}
W.~Wen, Y.~Jia, and W.~Xia, ``Joint scheduling and resource allocation for
  federated learning in {SWIPT}-enabled micro {UAV} swarm networks,''
  \emph{China Communications}, vol.~19, no.~1, pp. 119--135, Jan. 2022.

\bibitem{Lan2020simultaneous}
Q.~Lan, H.~S. Kang, and K.~Huang, ``Simultaneous signal-and-interference
  alignment for two-cell over-the-air computation,'' \emph{IEEE Wireless
  Commun. Lett.}, vol.~9, no.~9, pp. 1342--1345, Sept. 2020.

\bibitem{Shi2014GroupSparse}
Y.~Shi, J.~Zhang, and K.~B. Letaief, ``Group sparse beamforming for green
  cloud-{RAN},'' \emph{IEEE Trans. Wireless Commun.}, vol.~13, no.~5, pp.
  2809--2823, May 2014.

\bibitem{Cao2021multiCellAirCompTWC}
X.~Cao, G.~Zhu, J.~Xu, and K.~Huang, ``Cooperative interference management for
  over-the-air computation networks,'' \emph{IEEE Trans. Wireless Commun.},
  vol.~20, no.~4, pp. 2634--2651, Apr. 2021.

\bibitem{Wu2018MultiUAV}
Q.~{Wu}, Y.~{Zeng}, and R.~{Zhang}, ``Joint trajectory and communication design
  for multi-{UAV} enabled wireless networks,'' \emph{IEEE Trans. Wireless
  Commun.}, vol.~17, no.~3, pp. 2109--2121, Mar. 2018.

\bibitem{Zhan2019timeTWC}
C.~Zhan and Y.~Zeng, ``Completion time minimization for multi-{UAV}-enabled
  data collection,'' \emph{IEEE Trans. Wireless Commun.}, vol.~18, no.~10, pp.
  4859--4872, Oct. 2019.

\bibitem{Shen2020mutiUAV}
C.~Shen, T.-H.~H. Chang, J.~Gong, Y.~Zeng, and R.~Zhang, ``Multi-{UAV}
  interference coordination via joint trajectory and power control,''
  \emph{IEEE Trans. Signal Process.}, vol.~68, pp. 843--858, Jan. 2020.

\bibitem{xu2013block}
Y.~Xu and W.~Yin, ``A block coordinate descent method for regularized
  multiconvex optimization with applications to nonnegative tensor
  factorization and completion,'' \emph{SIAM J. Imag. Sci.}, vol.~6, no.~3, pp.
  1758--1789, 2013.

\bibitem{Ma2021JSAC}
T.~Ma, H.~Zhou, B.~Qian, N.~Cheng, X.~Shen, X.~Chen, and B.~Bai, ``{UAV}-{LEO}
  integrated backbone: A ubiquitous data collection approach for {B5G} internet
  of remote things networks,'' \emph{IEEE J. Sel. Areas Commun.}, vol.~39,
  no.~11, pp. 3491--3505, Nov. 2021.

\bibitem{Matolak2017channelModel}
D.~W. Matolak and R.~Sun, ``Air-ground channel characterization for unmanned
  aircraft systems—part {III}: The suburban and near-urban environments,''
  \emph{IEEE Trans. Veh. Techn.}, vol.~66, no.~8, pp. 6607--6618, Aug. 2017.

\bibitem{Channel2017UAV}
\emph{Study on Channel Model for Frequencies From 0.5 to 100 GHz}, document
  3GPP-TR-38.901, 3GPP, 2017.

\bibitem{Qual2017LET}
I.~Qualcomm~Technologies, \emph{LTE unmanned aircraft systems}, San Diego, CA,
  USA, Trial report v.1.0.1, 2017.

\bibitem{Mengali1997Sync}
U.~{Mengali} and A.~N. {D’Andrea}, \emph{Synchronization Techniques for
  Digital Receivers}, New York, NY, USA: Springer, 1997.

\bibitem{abari2015airshare}
O.~Abari, H.~Rahul, D.~Katabi, and M.~Pant, ``Airshare: Distributed coherent
  transmission made seamless,'' in \emph{Proc. IEEE Conf. Comput. Commun.
  (INFOCOM)}, 2015, pp. 1742--1750.

\bibitem{Boyd2004convex}
S.~Boyd and L.~Vandenberghe, \emph{{Convex Optimization}}.\hskip 1em plus 0.5em
  minus 0.4em\relax Cambridge university press, 2004.

\bibitem{FAA2016UAV}
FAA, \emph{Summary of small unmanned aircraft rule}, Washington, DC: Federal
  Aviation Administration, 2016.

\end{thebibliography}
\end{document}